\newtheorem{theorem}{Theorem}
\newtheorem{lemma}{Lemma}
\newtheorem{proposition}{Proposition}
\newtheorem{definition}{Definition}
\newtheorem{corollary}{Corollary}
\newtheorem{claim}{Claim}
 \theoremstyle{definition}
 \theoremstyle{remark}
 \numberwithin{equation}{section}
\newcommand{\vertiii}[1]{{\left\vert\kern-0.25ex\left\vert\kern-0.25ex\left\vert #1
    \right\vert\kern-0.25ex\right\vert\kern-0.25ex\right\vert}}
\newcommand{\f}[2]{\frac{#1}{#2}}
\newcommand{\cl}{{\mathcal L}}
\newcommand{\al}{\alpha}
\newcommand{\be}{\beta}
\newcommand{\ga}{\gamma}
\newcommand{\de}{\delta}
\newcommand{\De}{\Delta}
\newcommand{\ve}{\varepsilon}
\newcommand{\la}{\lambda}
\newcommand{\si}{\sigma}
\newcommand{\vp}{\varphi}
\newcommand{\om}{\omega}
\newcommand{\rd}{{\mathbf R}^d}
\newcommand{\rone}{\mathbf R}
\newcommand{\dpr}[2]{\langle #1,#2 \rangle}
\renewcommand{\sd}{\mathbf S^{d-1}}
\newcommand{\eps}{\epsilon}
\newcommand{\p}{\partial}
\newcommand{\beq}{\begin{equation}}
\newcommand{\eeq}{\end{equation}}
\newcommand{\beqna}{\begin{eqnarray*}}
\newcommand{\eeqna}{\end{eqnarray*}}
\newcommand{\beqn}{\begin{equation*}}
\newcommand{\eeqn}{\end{equation*}}
\newcommand{\bp}{\begin{proof}}
\newcommand{\ep}{\end{proof}}
\newcommand{\bprop}{\begin{proposition}}
\newcommand{\eprop}{\end{proposition}}
\newcommand{\bt}{\begin{theorem}}
\newcommand{\et}{\end{theorem}}
\newcommand{\bex}{\begin{Example}}
\newcommand{\eex}{\end{Example}}
\newcommand{\bc}{\begin{corollary}}
\newcommand{\ec}{\end{corollary}}
\newcommand{\bcl}{\begin{claim}}
\newcommand{\ecl}{\end{claim}}
\newcommand{\bl}{\begin{lemma}}
\newcommand{\el}{\end{lemma}}
\newcommand{\cj}{{\mathcal J}}
\begin{document}

\title{Mixed dispersion nonlinear Schr\"odinger equation
in higher dimensions: theoretical analysis and numerical computations}

\author{Atanas Stefanov}
\affiliation{ Department of Mathematics, University of Alabama - Birmingham,
	1402 10th Avenue South
	Birmingham AL 35294, USA
}
\email{stefanov@uab.edu}

\author{Georgios A. Tsolias}
\affiliation{ Department of Mathematics and Statistics,
University of Massachusetts, Amherst
Amherst, MA 01003, USA}
\email{gtsolias@umass.edu}

\author{Jes\'us Cuevas-Maraver}
\affiliation{Grupo de F\'{\i}sica No Lineal, Departamento de F\'{\i}sica Aplicada I,
Universidad de Sevilla. Escuela Polit\'{e}cnica Superior, C/ Virgen de Africa, 7, 41011-Sevilla, Spain \\
Instituto de Matem\'{a}ticas de la Universidad de Sevilla (IMUS). Edificio
Celestino Mutis. Avda. Reina Mercedes s/n, 41012-Sevilla, Spain}
\email{jcuevas@us.es}

\author{Panayotis G. Kevrekidis}
\affiliation{Department of Mathematics and Statistics,
University of Massachusetts, Amherst
Amherst, MA 01003-9305, USA}
\email{kevrekid@umass.edu}

\date{\today}

\begin{abstract}
 In the present work we provide a characterization of the ground states
 of a higher-dimensional quadratic-quartic model of the nonlinear
 Schr{\"o}dinger class with a combination of a focusing biharmonic operator
  with either an isotropic or an anisotropic
 defocusing Laplacian operator (at the linear level)
 and power-law nonlinearity.
 Examining principally the prototypical
 example of dimension $d=2$,
 we find that instability
arises beyond a certain threshold coefficient of the Laplacian between
the cubic and quintic cases, while all solutions are stable for powers below the cubic. Above the quintic, and up to a critical nonlinearity
exponent $p$, there exists a progressively narrowing range of stable
frequencies. Finally, above the critical $p$ all solutions are unstable.
The picture is rather similar in the anisotropic case, with the difference
that even before the cubic case, the numerical computations suggest
an interval of unstable frequencies. Our analysis generalizes the
relevant observations for arbitrary combinations of Laplacian
prefactor $b$ and nonlinearity power $p$.
\end{abstract}

\maketitle

   \section{Introduction}

   The nonlinear Schr{\"o}dinger equation~\cite{sulem,ablowitz1,mjarecent,siambook} is a well-established model
   characterizing a wide range of physical settings extending from
   deep water waves~\cite{mjarecent}  to electromagnetic wave evolution in optical
   fibers~\cite{hasegawa,kivshar}, and from dilute gases of atomic condensates~\cite{siambook,stringari,pethick} to
   waves in plasmas~\cite{plasmas,ir}. However, in recent years,
   biharmonic dispersive-wave models have been gaining considerable
   traction due to their own emergence in a variety of applications.
   Arguably, one of the most notable examples thereof is due to the
   experimental realization of dispersion engineering in
   laboratory optical systems~\cite{pqs} that enabled quartic dispersion
   and, through the competition of that with nonlinearity, the formation
   of the so-called pure-quartic solitons (PQS)~\cite{pqs}.
   This type of dispersion engineering is also responsible for
   the realization of the so-called pure-quartic soliton laser~\cite{pqs3}.
   In fact, even pure dispersion of higher orders has been
   explored, e.g., in~\cite{RungePRR2021}, giving rise to
   considerations of conveniently programmable dispersions
   in fiber lasers~\cite{RungePRR2021}.
   { As explained in further detail, e.g., in~\cite{pqs},
   the prototypical physical system, on account of which such models have
   recently been intensely explored, involves dispersion-engineered photonic crystal waveguides.
   In such optical settings, the ability to suitably
   manipulate dispersion in competition with effects of
   self-phase modulation is the principal reason for the
   ability to generate, as well as experimentally observe such
   solitary waves.}
   Motivated by these developments, numerous mathematical works
   have explored the existence and stability of solitary waves
   in such settings~\cite{beam_demirkaya,beam1,atanas}.   The accessibility of a wide range
   of dispersion profiles has revived considerations
   of different co-existing types of dispersion, e.g.,
   in the form of a quadratic and a quartic term, as e.g.
   in the work of~\cite{pqs2}. The latter setting has been
   previously explored, e.g., in the classic works
   of~\cite{KarpmanPLA1994,KarpmanPRE96}. Similar competitions
   between the quadratic and quartic dispersion have also
   been recently considered in the wave equation setting, giving
   rise to different bound states and intriguing dynamical
   phenomena~\cite{TsoliasJPA2021}.

   It is indeed this topic of competing Laplacian and biharmonic terms
   that we revisit in the present work, especially with a view
   to higher-dimensional considerations and the interplay of the
   power (exponent) $p$ of the nonlinearity and the dimensionality $d$
   of the linear operator.
  More precisely, we shall consider the following mathematical models:
    \begin{eqnarray}
    \label{10}
    & & 	 i u_t +\De^2 u+b \De u   - |u|^{p-1} u=0, \ \ x\in \rd \\
    \label{12}
    & & 	 i u_t +\De^2 u+b \p_{x_1}^2u   - |u|^{p-1} u=0, \ \ x\in \rd
    \end{eqnarray}
   Our work will be about the study of solitary waves of such models and their
   stability properties. In fact, we consider standing waves in the form $u=e^{-i \om t} \Phi$, which results in the elliptic profile equations:
   \begin{eqnarray}
   \label{20}
   & & \De^2 \Phi +b \De \Phi+\om \Phi   - |\Phi|^{p-1} \Phi=0, \ \ x\in \rd \\
   \label{22}
   & & \De^2 \Phi +b \p_{x_1}^2  \Phi+\om \Phi   - |\Phi|^{p-1} \Phi=0, \ \ x\in \rd \
   \end{eqnarray}
   We will refer to the model \eqref{10} as the isotropic case, while the model \eqref{12} as the anisotropic case { (due
   to its different dispersion along the direction $x_1$)}.
   Next, we set up the linear stability framework for these models. Namely, taking $u=e^{- i \om t}(\Phi+v)$, plugging this in \eqref{10} (or \eqref{12} respectively) and ignoring the higher order terms (i.e. super-linear ones of the form $O(v^2)$), we obtain for $\vec{v}=(\Re v, \Im v)$,
   \begin{eqnarray}
   \label{219}
   \vec{v}_t &=& \cj \cl \vec{v}, \cj=\left(\begin{array}{cc}
   0 & -1 \\ 1 & 0
   \end{array}\right), \cl=\left(\begin{array}{cc}
   \cl_ +& 0 \\ 0 & \cl_-
   \end{array}\right) \\
   \label{221}
   \cl_+ &=& \De^2+b\De + \om - p |\Phi|^{p-1},  \\
   \label{222}
   \cl_- &=&  \De^2+ b \De +\om -  |\Phi|^{p-1}.
   \end{eqnarray}
   Similarly, the eigenvalue problem for the anisotropic model \eqref{12} is  also in the form \eqref{219}, with $\cl_\pm$ given by
   $$
   \left\{
   \begin{array}{ll}
   \cl_+ &= \De^2+ b \p_{x_1}^2 + \om - p |\phi|^{p-1},  \\
   \cl_- &= \De^2+ b \p_{x_1}^2 +\om -  |\phi|^{p-1}
   \end{array}
   \right.
   $$

   We now give a formal definition of spectral stability, which, in the context
   of the standing waves of the model
   of interest, is the central focus
   of the present work.
   \begin{definition}
   	\label{defi:10} We say that the corresponding standing wave solution $e^{- i \om t} \Phi$ is spectrally stable, provided the eigenvalue problem $\cj \cl v= \mu v$ does not have non-trivial solution $(v, \mu): v\in H^4(\rd), \mu: \Re \mu>0$.
   \end{definition}

 The closest in spirit work to the present
   one is that of~\cite{KarpmanPRE96}.  In it, however, the author considers a different model, namely
   \begin{equation}
   \label{karp:10}
    i u_t +\gamma \De^2 u+\De u   + |u|^{p-1} u=0, \ \ x\in \rd.
   \end{equation}
   The authors obtains a number of useful (and mostly rigorous) results for the standing waves for these models, especially in the regime\footnote{ Note that the case $\ga>0$ in \eqref{karp:10} is not as relevant physically as it does not support (bright) localized waves. It basically corresponds to the de-focusing case in the standard NLS framework.} $\gamma<0, |\ga|<<1$. Note however that this case, after some rescaling is equivalent to the case $b<0$ in \eqref{10}, whereas our main interest  is in the case $b>0$. The latter involves a competition
   (rather than a cooperation) of the
   linear contributions and, hence,
   represents a case of particular interest.

   In the present
   setting, we will examine systematically  the isotropic case,
   but also compare it with the anisotropic one whereby the
   Laplacian operator is replaced by a second partial derivative along
   only a single spatial direction. We will present theoretical results
   in both cases for the ground states of the system and their stability
   as a function of the nonlinearity power $p$ and the coefficient of
   the Laplacian (or of the one-dimensional second partial derivative)
   $b$. Our principal theorems are,
   accordingly, stated in the next
   Section. {
   We note in passing the broad appeal of the topic of spectral
   stability of solitary waves in a variety of related nonlinear dispersive-wave    models, as evidenced, e.g., in the works of~\cite{Yang2010,Wang2016,Gong2022}.}

   We corroborate our theoretical analysis with detailed numerical computations
   that illustrate systematically both the isotropic and the anisotropic
   case with $d=2$, as a function of $b$ and also as a function of $p$.
   Starting with the isotropic case,
   we find that up to the cubic case of $p=3$, the relevant ground
   states are generically stable, irrespectively of the value of $b$.
   Beyond $p=3$ and for $3<p<5$, a critical threshold of $b$ exists
   such that below the relevant threshold, the wave is spectrally stable,
   while above, it destabilizes. Further, above $p=5$ and below a critical
   $p$, the waves will only be spectrally stable for an interval of
   $b$'s, while upon crossing this critical threshold, a saddle-center
   bifurcation leads to the disappearance of all stable solutions of the
   isotropic setting. Interestingly, the anisotropic example bears
   numerous similarities with the above described isotropic case.
   The most notable difference that is worth highlighting is that even
   below $p=3$, the anisotropic case may bear instabilities for a narrow interval
   of $b$-values; more details will be shown in our numerical computations that
   follow.

\section{Mathematical Setup and main results}

   We start by noting that
   the problem of interest possesses
continuous spectrum, which effectively,
per Weyl's theorem~\cite{kato}, amounts to the spectrum
of the homogeneous background state: $\si(\cl_\pm)=\si(\De^2+b\De)=\mathrm{Range}[\xi\to |\xi|^4-b|\xi|^2+\om]=[\om-\f{b^2}{4}, +\infty)$. If we do not expect embedded eigenvalues in the essential spectrum\footnote{However,  there are fourth order differential operators with fast decaying potentials, who have embedded eigenvalues in its continuous spectrum. This is in sharp contrast with the second order operators, who may possess eigenvalues only at the edges of the continuous spectrum.}, and since by a direct inspection $\cl_-[\Phi]=0$, so $0\in \si_{p.p.}(\cl_-)$ (where $\si_{p.p.}$
denotes the pure point spectrum), then we can clearly conclude $\om\geq \f{b^2}{4}$, which
corresponds to the range of frequencies
of the standing wave that we will be
considering in what follows.

Our principal theme of study will
consist of the standing wave solutions of \eqref{10}, \eqref{12} (that is, the solutions of \eqref{20}, \eqref{22}). Our main interest is in the (spectral) stability of these waves. For future reference,  we introduce the associated  Hamiltonian functionals,
	\begin{eqnarray*}
		I(u) &=&  \left\{\f{1}{2}  \int_{\rd} |\De u|^2- \f{b}{2} \int_{\rd} |\nabla u|^2 - \f{1}{p+1} \int_{\rd} |u|^{p+1}\right\},  \\
		J(u) &=&  \left\{\f{1}{2}  \int_{\rd} |\De u|^2- \f{b}{2} \int_{\rd} |u_{x_1}|^2 - \f{1}{p+1} \int_{\rd} |u|^{p+1}\right\}
	\end{eqnarray*}
 and  the associated constrained  minimization problems
 \begin{eqnarray}
 \label{e:10}
& &  \left\{
 \begin{array}{l}
 I(u)\to \min \\
 \int_{\rd} |u(x)|^2 dx=\la.
 \end{array}
 \right., \\
 \label{e:11}
 & & \ \left\{
 \begin{array}{l}
 J(u)\to \min \\
 \int_{\rd} |u(x)|^2 dx=\la.
 \end{array}
 \right.
 \end{eqnarray}
 The solutions of these problems, if they exist, are referred to as normalized waves for the corresponding variational problems.

For the rest of the paper, we consider the case $b>0$ only.
We have the following result.
\begin{theorem}(The isotropic case)
	\label{theo:10}
	Let $d\geq 2$ and $b>0$. Then, there exists a unique $p_*(d) \in (1+\f{4}{d}, 1+\f{8}{d+1})$, so that
	\begin{itemize}
		\item For $1<p<p_*(d)$, the constrained minimization problem \eqref{e:10} has a solution for every $\la>0$. Moreover, such solutions satisfy the Euler-Lagrange equation
		\begin{equation}
		\label{47}
		\De^2 \Phi+b \De \Phi+\omega \Phi - |\Phi|^{p-1} \Phi=0, x\in \rd,
		\end{equation}
		for some $\omega=\omega(\la)>0$. In addition, all the functions $e^{- i \omega(\la) t}\Phi$ 		 are spectrally stable in the context of the isotropic NLS \eqref{10}.
		\item For $1+\f{8}{d}>p>p_*(d)$, there exists $\la_*(p,d,b)>0$, so that the problem \eqref{e:10} has solutions for all $\la>\la_*(p,d)$. These solutions are spectrally stable.
	\end{itemize}
\end{theorem}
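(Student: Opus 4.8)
The plan is to realize the profile $\Phi$ as a constrained minimizer of \eqref{e:10} via concentration--compactness, and to read its spectral stability off that variational characterization. Put $m(\la):=\inf\{I(u):u\in H^2(\rd),\ \norm{u}_{L^2}^2=\la\}$; since $I$ is the conserved Hamiltonian of \eqref{10} (together with $\norm{u}_{L^2}^2$), any such minimizer is automatically a standing-wave profile, the frequency $\om$ being the associated Lagrange multiplier. First I would check that $m(\la)$ is finite: bounding $\norm{\nabla u}_2^2\le\norm{\De u}_2\norm{u}_2$ and invoking the biharmonic Gagliardo--Nirenberg inequality $\norm{u}_{p+1}^{p+1}\lesssim\norm{\De u}_2^{d(p-1)/4}\norm{u}_2^{p+1-d(p-1)/4}$, the two negative terms of $I$ are absorbed by $\f{1}{2}\norm{\De u}_2^2$ exactly when $d(p-1)/4<2$, i.e. $p<1+\f{8}{d}$, which accounts for the upper endpoint on $p$ in the statement.

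The crux is the location of $m(\la)$ relative to the ``energy at infinity''. The Fourier symbol $\f{1}{2}|\xi|^4-\f{b}{2}|\xi|^2$ has minimum $-b^2/8$ on the sphere $\{|\xi|^2=b/2\}$, so testing with $u_n=e^{ik\cdot x}v_n$, $|k|^2=b/2$ and $v_n$ spreading, yields $m(\la)\le-\f{b^2}{8}\la$ for all $\la>0$. By the Lions concentration--compactness principle a minimizer exists once one has (a) the strict inequality $m(\la)<-\f{b^2}{8}\la$ --- this forbids vanishing, since a vanishing minimizing sequence has $\norm{u_n}_{p+1}\to0$ and hence $\liminf I(u_n)\ge-\f{b^2}{8}\la$ --- together with (b) strict subadditivity $m(\la)<m(\la_1)+m(\la_2)$, which I would derive from $m(\theta\la)<\theta m(\la)$ ($\theta>1$) by testing with $\sqrt\theta\,u$, the gain being controlled because (a) forces near-minimizers to have $\norm{u}_{p+1}$ bounded below. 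Hence the whole existence question reduces to (a). To prove it I would use a ``Knapp-type'' trial function whose Fourier transform is concentrated on a cap of $\{|\xi|^2=b/2\}$ of angular width $\theta$ and radial thickness $\sim\theta^2$; a short computation gives $I(u)+\f{b^2}{8}\norm{u}_2^2\lesssim\la\,\theta^4-c\,\la^{(p+1)/2}\,\theta^{(d+1)(p-1)/2}$, which is negative for small $\theta$ and \emph{every} $\la>0$ precisely when $(d+1)(p-1)/2<4$, i.e. $p<1+\f{8}{d+1}$, and --- for $p$ at or beyond this value --- only once $\la$ exceeds a threshold $\la_*(p,d,b)$, the source of the second bullet. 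The exponent $p_*$ is then characterized as the supremum of powers for which (a) holds at every $\la>0$ (uniqueness following from monotonicity in $p$); the lower bound $p_*>1+\f{4}{d}$ comes from the trial-function estimate being easily satisfied in the regime where $|u|^{p-1}u$ is mass-subcritical for $-\De$, and the upper bound $p_*<1+\f{8}{d+1}$ from a matching small-$\la$ non-existence estimate, in which a sharpened interpolation inequality near the sphere, roughly $\norm{u}_{p+1}^{p+1}\lesssim\norm{(\De+\f{b}{2})u}_2^{(p-1)(d+1)/4}\norm{u}_2^{p+1-(p-1)(d+1)/4}$, drives $m(\la)$ back up to $-\f{b^2}{8}\la$.

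With a minimizer $\Phi$ in hand, it solves the Euler--Lagrange equation \eqref{47} with $\om=\om(\la)$ the multiplier, and $\om(\la)\ge b^2/4>0$ since (as already observed) $\cl_-[\Phi]=0$ together with $\si(\cl_-)=[\om-b^2/4,\infty)$ rules out $\om<b^2/4$. Spectral stability of $e^{-i\om(\la)t}\Phi$ then follows from minimality by the standard Grillakis--Shatah--Strauss/Cazenave--Lions mechanism: the set of constrained minimizers is orbitally stable under the (locally well-posed, $L^2$-subcritical) flow of \eqref{10}, which conserves $I$ and $\norm{u}_{L^2}^2$, and orbital stability excludes eigenvalues of $\cj\cl$ with positive real part; equivalently, constrained minimality makes $\cl$ nonnegative on the codimension-one subspace orthogonal to $\Phi$, and the Hamiltonian instability index count then returns no unstable modes. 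I expect the main obstacle to be the sharp determination of $p_*$, and in particular the strict bound $p_*<1+\f{8}{d+1}$: it rests on delicate restriction-type estimates for functions with Fourier support near $\{|\xi|^2=b/2\}$ and on whether the associated optimal constant is attained, which is precisely why $p_*$ can only be localized to an interval rather than written explicitly.
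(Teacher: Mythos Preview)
Your architecture --- realize $\Phi$ as a constrained minimizer, run concentration--compactness with the threshold energy $-\f{b^2}{8}\la$, then read stability off the variational characterization --- is exactly the paper's. The substantive difference is in how you handle the key strict inequality $m(\la)<-\f{b^2}{8}\la$.

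The paper encodes this step as a dichotomy for a mixed-dispersion Gagliardo--Nirenberg inequality
\[
\|\phi\|_{L^{p+1}}^{p+1}\le C\,\|\phi\|_{L^2}^{p-1}\int\Bigl[|\De\phi|^2-b|\nabla\phi|^2+\tfrac{b^2}{4}\phi^2\Bigr]dx,
\]
and defines $p_*$ as the threshold for its validity. It then attacks this via a Littlewood--Paley decomposition near the sphere $|\xi|^2=b/2$: the upper bound $p_*\le1+\f{8}{d+1}$ comes from Stein--Tomas restriction (controlling $\|P_m f\|_{L^{p+1}}\lesssim 2^{-\al m}\|f\|_2$ with $\al=\f{d+1}{2}(\f{1}{2}-\f{1}{p+1})$), while the lower bound $p_*\ge1+\f{4}{d}$ comes from testing against the \emph{radial} kernel $K_m$ of the shell multiplier. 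You instead go directly at $m(\la)$ with a Knapp cap. Your computation is correct --- and in fact sharper than the paper's lower bound: the Knapp box of size $\theta^2\times\theta^{d-1}$ shows the GNS inequality fails for \emph{every} $p<1+\f{8}{d+1}$, which forces $p_*\ge1+\f{8}{d+1}$, not merely $p_*>1+\f{4}{d}$. Combined with Stein--Tomas this pins $p_*$ to the right endpoint $1+\f{8}{d+1}$ (at least for $d=2,3$, where $p_*+1$ lies in the Stein--Tomas range $q\le\f{2(d+1)}{d-1}$). So your own trial-function calculation is inconsistent with the strict inclusion $p_*<1+\f{8}{d+1}$ and the weaker lower bound $p_*>1+\f{4}{d}$ that you then assert; the paper's radial extremizer $K_m$ is simply a worse test than your Knapp cap. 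Your ``sharpened interpolation'' heuristic is morally the Stein--Tomas bound the paper proves, but note that it does not force $m(\la)=-\f{b^2}{8}\la$ for small $\la$ unless the exponent on $\|(\De+\f{b}{2})u\|_2$ strictly exceeds $2$; at $p=1+\f{8}{d+1}$ it equals $2$ and is saturated by Knapp.

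On stability: the paper does not route through orbital stability or GSS but argues directly. From the second-variation test at the constrained minimum it extracts $n(\cl_+)=1$, $\cl_+|_{\{\Phi\}^\perp}\ge0$, and $\cl_-\ge0$ with simple kernel $\mathrm{span}\{\Phi\}$, then reduces $\cj\cl v=\la v$ to the self-adjoint problem $\sqrt{\cl_-}\cl_+\sqrt{\cl_-}\,\eta=-\la^2\eta$ on $\{\Phi\}^\perp$, which is nonnegative there. Your index-count remark is the right idea, but you would still need to supply exactly this spectral information on $\cl_\pm$; the passage ``orbital implies spectral'' you also invoke is not automatic and presupposes a local well-posedness theory you have not set up.
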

Our numerical results suggest that
$3.2<p_*(2)<3.4$, with the relevant
value being in the vicinity of $p_*(2) \approx
3.3$, yet the subtle nature of the numerical
considerations near the limit only affords
us an approximate result in this context.

Next, we have the following result regarding the anisotropic case.
\begin{theorem}(The anisotropic case)
	\label{theo:12}
	Let $d\geq 2$ and $b>0$. Then,
	\begin{itemize}
		\item For $1<p<1+\f{4}{d}$, the constrained minimization problem \eqref{e:11} has a solution for every $\la>0$. Moreover, all of these solutions are spectrally stable.
		\item For $1+\f{8}{d}>p>1+\f{4}{d}$, there exists $\la_*(p,d,b)>0$, so that the problem \eqref{e:10} has solutions for all $\la>\la_*(p,d)$. These solutions are spectrally stable.
	\end{itemize}
\end{theorem}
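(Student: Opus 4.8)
The plan is to run the variational argument in close parallel with the isotropic Theorem~\ref{theo:10}: realize the standing waves as minimizers of $J$ on the mass sphere $\{\|u\|_{L^2}^2=\lambda\}$, prove compactness of minimizing sequences by concentration compactness, and deduce spectral stability from the constrained--minimization structure. Throughout, set $Q(u)=\tfrac12\|\De u\|_{L^2}^2-\tfrac b2\|u_{x_1}\|_{L^2}^2=\tfrac12\int_{\rd}\big(|\xi|^4-b\xi_1^2\big)|\wh u(\xi)|^2\,d\xi$ and $m(\lambda)=\inf\{J(u):u\in H^2(\rd),\ \|u\|_{L^2}^2=\lambda\}$. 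Since $|\xi|^4-b\xi_1^2\ge-\tfrac{b^2}{4}$, one has $\tfrac14\|\De u\|_{L^2}^2-\tfrac{b^2}{4}\lambda\le Q(u)$, and the Gagliardo--Nirenberg inequality $\|u\|_{L^{p+1}}^{p+1}\le C\|\De u\|_{L^2}^{d(p-1)/4}\|u\|_{L^2}^{\,p+1-d(p-1)/4}$ has subcritical exponent $d(p-1)/4<2$ precisely because $p<1+\tfrac8d$; together these give $m(\lambda)>-\infty$ and $H^2$--boundedness of minimizing sequences.

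The one genuinely delicate input --- and the step I expect to be the main obstacle --- is the strict bound
\begin{equation}\label{plan:key}
m(\lambda)<-\tfrac{b^2}{8}\lambda ,
\end{equation}
which is exactly what excludes the ``vanishing'' scenario (if $u_n\rightharpoonup 0$ vanishes then $\|u_n\|_{L^{p+1}}\to 0$, so $\liminf J(u_n)=\liminf Q(u_n)\ge-\tfrac{b^2}{8}\lambda$) and, as noted below, also upgrades the key scaling inequality to strict form. To prove \eqref{plan:key} I would test with profiles adapted to the minimum of the Fourier symbol, $u_R(x)=e^{i\sqrt{b/2}\,x_1}\phi_R(x)$ with $\phi_R=R^{-d/2}\phi(\cdot/R)$ and $\|\phi\|_{L^2}^2=\lambda$. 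Taylor expanding $|\xi|^4-b\xi_1^2$ about $\xi=\sqrt{b/2}\,e_1$ (one of the two symmetric, \emph{nondegenerate} minima, of value $-\tfrac{b^2}4$) gives $0\le Q(u_R)+\tfrac{b^2}{8}\lambda=O(R^{-2})$, while the nonlinearity contributes $-\tfrac1{p+1}\|u_R\|_{L^{p+1}}^{p+1}=-cR^{-d(p-1)/2}$, $c>0$; hence $J(u_R)+\tfrac{b^2}{8}\lambda\le CR^{-2}-cR^{-d(p-1)/2}$, which becomes negative for $R$ large precisely when $\tfrac{d(p-1)}{2}<2$, i.e. $p<1+\tfrac4d$. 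This proves \eqref{plan:key} for all $\lambda>0$ in the first regime and is exactly where the anisotropic threshold $1+\tfrac4d$ enters --- in the isotropic case the symbol $|\xi|^4-b|\xi|^2$ is minimized on a whole sphere with Hessian degenerate along it, so analogous test functions can spread cheaply along that sphere and push the threshold up to $p_*(d)>1+\tfrac4d$. In the regime $1+\tfrac4d<p<1+\tfrac8d$ these test functions no longer suffice, but for a fixed profile $w$ one has $J(\sqrt\lambda\,w)=\lambda Q(w)-\tfrac{\lambda^{(p+1)/2}}{p+1}\|w\|_{L^{p+1}}^{p+1}$, and since $(p+1)/2>1$ this drops below $-\tfrac{b^2}{8}\lambda$ once $\lambda$ exceeds a threshold $\lambda_*(p,d,b)$ obtained by optimizing in $w$; this is the source of the restriction $\lambda>\lambda_*$.

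Granting \eqref{plan:key}, the rest is routine. Testing with $\sqrt t\,u$ gives $m(t\lambda)\le t\,m(\lambda)$ for $t>1$; were equality to hold, any minimizing sequence for $m(\lambda)$ would satisfy $\|u_n\|_{L^{p+1}}\to 0$, contradicting \eqref{plan:key}, so in fact $m(t\lambda)<t\,m(\lambda)$, which yields strict subadditivity $m(\lambda_1+\lambda_2)<m(\lambda_1)+m(\lambda_2)$ and thereby excludes ``dichotomy''. The Lions concentration--compactness lemma applied to $\rho_n=|u_n|^2$ then forces compactness up to translation: a subsequence of any minimizing sequence converges in $H^2$, after translation and a phase, to a minimizer $\Phi$, which solves the Euler--Lagrange equation \eqref{22} with a Lagrange multiplier $\omega=\omega(\lambda)$ (necessarily in the admissible range $\omega\ge b^2/4$).

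For spectral stability one uses the variational characterization: the second variation of $J$ under the mass constraint gives $\langle\cl_+v,v\rangle\ge 0$ on $\{\Phi\}^\perp$, so $n(\cl_+)\le 1$, while $\cl_-\Phi=0$ with $\Phi$ the positive ground state of $\cl_-$ gives $n(\cl_-)=0$. The standard Hamiltonian--Krein index count then forces the number of eigenvalues of $\cj\cl$ with positive real part to be zero --- equivalently, the concentration--compactness step yields orbital stability of the set of minimizers, which already precludes such eigenvalues --- so $e^{-i\omega t}\Phi$ is spectrally stable in the sense of Definition~\ref{defi:10}. The points requiring care here (positivity and simplicity of the ground state of the fourth-order operator $\cl_-$, and the Vakhitov--Kolokolov--type slope condition implicit in the index count) are handled exactly as in the isotropic Theorem~\ref{theo:10}.
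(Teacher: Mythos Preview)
Your proof is correct and follows the same overall architecture as the paper: well-posedness of $m(\lambda)$ via GNS, the key strict bound \eqref{plan:key} to kill vanishing, strict subadditivity to kill dichotomy, Lions concentration--compactness to produce a minimizer $\Phi$, and spectral stability from $n(\cl_+)=1$, $\cl_-\ge 0$. The paper's proof (Lemma~\ref{prop:39}, Proposition~\ref{prop:48}, Section~4.3) is organized identically.

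The one genuinely different step is how \eqref{plan:key} is obtained. The paper does not test $m(\lambda)$ directly; instead it proves, as a standalone harmonic-analysis result (Proposition~\ref{prop:25}), that the anisotropic GNS inequality $\|\phi\|_{L^{p+1}}^{p+1}\le C\|\phi\|_{L^2}^{p-1}\int(|\De\phi|^2-b|\p_{x_1}\phi|^2+\tfrac{b^2}{4}\phi^2)$ \emph{fails} for $p\le 1+\tfrac4d$ and \emph{holds} for $1+\tfrac4d<p<1+\tfrac8d$, via a Littlewood--Paley analysis of the dispersion symbol $h(\xi)=(\xi_1^2-1)^2+|\xi'|^4+2\xi_1^2|\xi'|^2$ near its zero set. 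The non-vanishing Lemma~\ref{prop:39} is then proved by contradiction: if $L_3=0$ the GNS inequality would follow with constant $\tfrac{2\lambda^{(p-1)/2}}{p+1}$. Your argument bypasses the GNS machinery entirely: you locate the two isolated, \emph{nondegenerate} minima of $|\xi|^4-b\xi_1^2$ at $\pm\sqrt{b/2}\,e_1$, build modulated test functions $e^{i\sqrt{b/2}x_1}\phi_R$, and read off \eqref{plan:key} from a Taylor expansion balancing $O(R^{-2})$ against $cR^{-d(p-1)/2}$. This is shorter and more transparent, and your observation that the isotropic threshold $p_*(d)>1+\tfrac4d$ arises from the tangential degeneracy of the Hessian along the minimizing sphere is exactly the right heuristic (and is what the paper's Stein--Tomas argument makes precise). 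The paper's route, on the other hand, yields the GNS inequality \eqref{gns:22} as an independent result.

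One small caution: your phrase ``$\Phi$ the positive ground state of $\cl_-$'' is not how the paper proceeds, and for a fourth-order operator no Perron--Frobenius/maximum-principle argument is available to guarantee either positivity of $\Phi$ or simplicity of the bottom eigenvalue of $\cl_-$. The paper instead uses the pointwise comparison $\cl_->\cl_+$ (since $p>1$) together with $\cl_+|_{\{\Phi\}^\perp}\ge 0$ to force $\cl_-|_{\{\Phi\}^\perp}>0$ and hence $n(\cl_-)=0$ with $\ker\cl_-=\mathrm{span}\{\Phi\}$. Since you defer this point to the isotropic case, there is no actual gap, but you should drop the positivity claim.
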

{\bf Remark:} The statement of the Theorem \ref{theo:12} does not imply that {\it all} waves are spectrally stable, but rather only that the  minimizers of the constrained minimization problem \eqref{e:11} are guaranteed to be spectrally stable. In fact, in later sections, we numerically explore waves (i.e. functions satisfying \eqref{22}), which are not necessarily spectrally stable. Interestingly they happen to co-exist with stable constrained minimizers in that for a range of
$p$, there exist multiple solutions
corresponding to different
frequencies with some (2) of them
being stable and one unstable.
We now turn to the systematic construction
of the waves of interest.

\section{Construction  of the waves: Preliminaries}
We begin our considerations with an analysis of when the constrained minimization problem  \eqref{e:10} is well-posed. That is, whether the quantity $I[u]$ is bounded from below.
\subsection{Well-posedness of the constrained minimization problems}
To this end, introduce  the following functions
\begin{eqnarray*}
	m(\la) &=&   \inf\limits_{\|u\|_{L^2}^2=\la} \left\{\f{1}{2}  \int_{\rd} |\De u|^2- \f{b}{2} \int_{\rd} |\nabla u|^2 - \f{1}{p+1} \int_{\rd} |u|^{p+1}\right\} \\
	n(\la) &=&   \inf\limits_{\|u\|_{L^2}^2=\la} \left\{\f{1}{2}  \int_{\rd} |\De u|^2-
	\f{b}{2} \int_{\rd} |u_{x_1}|^2 - \f{1}{p+1} \int_{\rd} |u|^{p+1}\right\}
\end{eqnarray*}

It is not {\it a priori} clear that $m(\la), n(\la)$ is finite. We have the following result detailing that.
\begin{lemma}
	\label{le:10} Let $d\geq 1$. Then, for every $\la>0$ and $1<p<1+\f{8}{d}$, we have that $-\infty<m(\la)<0$.
	
	For $p>1+\f{8}{d}$, $m(\la)=-\infty$.
\end{lemma}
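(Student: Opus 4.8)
The plan is to prove the two assertions separately: for $1<p<1+\tfrac8d$ I will bound $I[u]$ from below and also exhibit a test family making it negative, and for $p>1+\tfrac8d$ I will run the same test family in the opposite scaling regime to drive $I[u]\to-\infty$ along the constraint.

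For the lower bound, the idea is that the (positive) biharmonic term $\tfrac12\|\Delta u\|_{L^2}^2$ should dominate the two negative terms once we put everything on the $\dot H^2$--$L^2$ interpolation scale. First I would record the exact Fourier identity $\|\Delta u\|_{L^2}=\|D^2u\|_{L^2}$ and the Cauchy--Schwarz interpolation $\|\nabla u\|_{L^2}^2\le\|\Delta u\|_{L^2}\|u\|_{L^2}$, so that on the sphere $\|u\|_{L^2}^2=\lambda$ one gets $\tfrac b2\|\nabla u\|_{L^2}^2\le\tfrac{b\sqrt\lambda}2\|\Delta u\|_{L^2}\le\tfrac18\|\Delta u\|_{L^2}^2+C(b,\lambda)$ by Young. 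Next I would invoke the Gagliardo--Nirenberg inequality $\|u\|_{L^{p+1}}\le C\|\Delta u\|_{L^2}^{\theta}\|u\|_{L^2}^{1-\theta}$ with $\theta=\tfrac{d(p-1)}{4(p+1)}$, so that $\tfrac1{p+1}\|u\|_{L^{p+1}}^{p+1}\le C\lambda^{(1-\theta)(p+1)/2}\|\Delta u\|_{L^2}^{\theta(p+1)}$ with exponent $\theta(p+1)=\tfrac{d(p-1)}{4}$. The assumption $p<1+\tfrac8d$ is exactly what gives $\theta(p+1)<2$, so a further application of Young absorbs this into $\tfrac18\|\Delta u\|_{L^2}^2+C(p,d,\lambda)$. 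Assembling the three estimates, $I[u]\ge\tfrac14\|\Delta u\|_{L^2}^2-C(b,p,d,\lambda)\ge-C(b,p,d,\lambda)$ uniformly over the constraint sphere, hence $m(\lambda)>-\infty$.

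For the strict negativity $m(\lambda)<0$ and, later, for the supercritical divergence, I would use the single mass-preserving rescaling family: fix $\phi\in\mathcal S(\mathbb R^d)$, $\phi\not\equiv0$, with $\|\phi\|_{L^2}^2=\lambda$, and set $u_\sigma(x)=\sigma^{d/2}\phi(\sigma x)$, which keeps $\|u_\sigma\|_{L^2}^2=\lambda$. A direct computation gives $I[u_\sigma]=\tfrac{\sigma^4}2\|\Delta\phi\|_{L^2}^2-\tfrac{b\sigma^2}2\|\nabla\phi\|_{L^2}^2-\tfrac{\sigma^{d(p-1)/2}}{p+1}\|\phi\|_{L^{p+1}}^{p+1}$. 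As $\sigma\to0^+$ the leading power is $\sigma^{\min(2,\,d(p-1)/2)}$ and its coefficient is strictly negative (and when $2=d(p-1)/2$ the two order-$\sigma^2$ contributions simply add, remaining strictly negative); hence $I[u_\sigma]<0$ for small $\sigma$, giving $m(\lambda)<0$. It is worth emphasizing that the term $-\tfrac b2\|\nabla\phi\|_{L^2}^2\,\sigma^2$, with $b>0$, is precisely what makes $m(\lambda)<0$ hold for \emph{all} subcritical $p$ and not merely for $p<1+\tfrac4d$. Finally, when $p>1+\tfrac8d$ we have $d(p-1)/2>4$, so sending $\sigma\to+\infty$ makes the nonlinear term dominate both others and $I[u_\sigma]\to-\infty$ along the constraint, whence $m(\lambda)=-\infty$.

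The only genuinely delicate point is bookkeeping the Gagliardo--Nirenberg exponent: I would verify that $\theta(p+1)=\tfrac{d(p-1)}4<2$ is equivalent to $p<1+\tfrac8d$ (this is what lets Young absorb the nonlinearity into the biharmonic term), while simultaneously checking the admissibility $\theta\le1$, i.e.\ that $L^{p+1}$ stays within the $\dot H^2$ Sobolev range when $d>4$ — which holds because $1+\tfrac8d\le\tfrac{d+4}{d-4}$ for $d>4$. Everything else is routine application of Young's inequality and the scaling computation above.
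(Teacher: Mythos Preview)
Your proof is correct and follows essentially the same route as the paper's: Gagliardo--Nirenberg plus Young to absorb the $L^{p+1}$ term into $\tfrac14\|\Delta u\|_{L^2}^2$, the interpolation $\|\nabla u\|_{L^2}^2\le\|\Delta u\|_{L^2}\|u\|_{L^2}$ for the gradient term, and the mass-preserving rescaling $u_\sigma=\sigma^{d/2}\phi(\sigma\cdot)$ both for $m(\lambda)<0$ (taking $\sigma\to0^+$) and for the supercritical blow-down (taking $\sigma\to+\infty$). Your explicit verification of the Gagliardo--Nirenberg admissibility condition $\theta\le1$ in dimensions $d>4$ and your precise identification of the leading power $\sigma^{\min(2,\,d(p-1)/2)}$ are cleaner than the paper's presentation, but the underlying argument is the same.
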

\begin{proof}
Assume that $1<p<1+\f{8}{d}$.
	By the Gagliardo-Nirenberg-Sobolev's inequalities
	\begin{eqnarray*}
	\|u\|_{L^{p+1}(\rd)}\leq C \|u\|_{\dot{H}^{d(\f{1}{2}-\f{1}{p+1})}}\leq
	C_{d,p} \|\De u\|_{L^2}^{\f{d}{2}(\f{1}{2}-\f{1}{p+1})} \|u\|_{L^2}^{1-\f{d}{2}(\f{1}{2}-\f{1}{p+1})}.
	\end{eqnarray*}
	Thus, for a function $u: \|u\|^2=\la$, we have
	$$
	\|u\|_{L^{p+1}(\rd)}^{p+1}\leq \la^{\f{1}{2}\left(p+1-\f{d}{4}(p-1)\right)} \|\De u\|_{L^2}^{\f{d(p-1)}{4}}=: C_\la \|\De u\|_{L^2}^{\f{d(p-1)}{4}}.
	$$
Since $	\f{d(p-1)}{4}<2$, we conclude by Young's inequality that for every $\de>0$,
$$
\|u\|_{L^{p+1}(\rd)}^{p+1}\leq  \left(\f{C_\la}{\de^{\f{d(p-1)}{8}}}\right)^{\f{8}{8-d(p-1)}} + \de \|\De u\|_{L^2}^2 \leq
D_{\la, \de,p} + \de \|\De u\|_{L^2}^2.
$$
	Trivially, $\|\nabla u\|^2\leq C_d \|\De u\| \|u\|\leq \de \|\De u\|^2+ \f{C_d^2 \la}{\de} $, so by setting $\de=\de_{\la,p,b}$ appropriately small, we obtain that for a function $u: \|u\|^2=\la$,
	\begin{equation}
	\label{112}
	I[u]\geq \f{1}{4} \|\De u\|_{L^2}^2 - C_{\la,p,b}.
	\end{equation}
	whence the function $m$ is bounded from below.

	Let $\phi$ be a test  function, $\|\phi\|_{L^2}^2=\la$. We take the scaling transformation $\phi_a=a^{d/2} \phi(a x)$, so $\|\phi_a\|_{L^2}^2=\la$. We have
	$$
	I[\phi_\eps]=a^4 \f{\|\De \phi\|^2}{2}-b a^2 \f{\|\nabla \phi\|^2}{2}
	- \f{a^{\f{d(p+1)}2-d}}{p+1}  \|\phi\|_{L^{p+1}}^{p+1}.
	$$
	Clearly, if $b>0$ and $0<a<<1$, the dominant term is $-b a^2 \f{\|\nabla \phi\|^2}{2}	- \f{a^{\f{d(p+1)}2-d}}{p+1}  \|\phi\|_{L^{p+1}}^{p+1}<0$, whence $m(\la)<0$ for these values.
	
	On the other hand,  if $p>1+\f{8}{d}$, we have $\f{d(p+1)}2-d>4$, so that
	$\lim_{a\to +\infty} I[\phi_a]=-\infty$.
\end{proof}
The next lemmata are  technical statements,  which will however impact the restrictions one must impose on $p$ (and other parameters),
in order to be able to construct the waves in Theorem \ref{theo:10}. In fact, we shall need specific Gagliardo-Nirenberg-Sobolev (GNS) type inequalities in order to resolve the existence requirements in Theorem \ref{theo:10} and Theorem \ref{theo:12}.
\subsection{The Gagliardo-Nirenberg-Sobolev  inequalities with mixed dispersion}
We start with the isotropic case.
\begin{proposition}
	\label{prop:20}
Let $b>0$. For every $d\geq 2$, 	there exists $p_*(d)$, so that: for all $1<p\leq p_*(d)$, the following estimate
\begin{equation}
\label{gns:10}
\|\phi\|_{L^{p+1}(\rd)}^{p+1}\leq C \|\phi\|_{L^2}^{p-1}
\left(\int_{\rd} [|\De \phi|^2 - b |\nabla\phi|^2 + \f{b^2}{4} \phi^2]dx\right)
\end{equation}
cannot hold for a given constant $C$ and all test functions $\phi$. In addition,
 $p_*(d)$ obeys the following
\begin{equation}
\label{g:10}
1+\f{4}{d}\leq  p_*(d)\leq 1+\f{8}{d+1}.
\end{equation}

On the other hand, for $1+\f{8}{d}>p>p_*(d)$, there exists a constant $C=C_{p,d,b}$, so that
\begin{equation}
\label{gns:20}
\|\phi\|_{L^{p+1}(\rd)}^{p+1}\leq C_{p,d,b} \|\phi\|_{L^2}^{p-1}
\left(\int_{\rd} [|\De \phi|^2 - b |\nabla\phi|^2 + \f{b^2}{4} \phi^2]dx\right)
\end{equation}

\end{proposition}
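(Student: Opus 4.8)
\smallskip
\noindent\emph{Proof plan.}
Everything hinges on the fact that, on the Fourier side, the form on the right of \eqref{gns:10} is $\int_{\rd}\big(|\xi|^2-\tfrac b2\big)^2|\widehat\phi(\xi)|^2\,d\xi$: it is nonnegative but \emph{degenerates on the sphere} $\Sigma_b:=\{\xi\in\rd:|\xi|^2=b/2\}$ --- the very sphere that forced $\om\ge b^2/4$ above. To refute \eqref{gns:10} for $p$ small one concentrates $\widehat\phi$ on a tiny set lying on $\Sigma_b$; to prove \eqref{gns:20} for $p$ larger one peels off the frequencies near $\Sigma_b$ and controls that piece by a restriction estimate, leaving the complementary piece to the biharmonic Gagliardo--Nirenberg bound already used in Lemma~\ref{le:10}.

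\smallskip
\noindent\emph{Failure for small $p$.}
Fix $\xi_0\in\Sigma_b$ and, for $0<\ve<1$, take $\widehat{\phi_\ve}=\mathbf 1_{Q_\ve}$ with $Q_\ve$ a cube of side $\ve$ centred at $\xi_0$. Since $|\xi|^2-\tfrac b2=(\xi-\xi_0)\cdot(\xi+\xi_0)=O(\ve)$ on $Q_\ve$, one computes $\|\phi_\ve\|_{L^2}^2\sim\ve^{d}$ and $\int_{\rd}\big(|\De\phi_\ve|^2-b|\nabla\phi_\ve|^2+\tfrac{b^2}4\phi_\ve^2\big)\,dx=\int_{Q_\ve}\big(|\xi|^2-\tfrac b2\big)^2\,d\xi\sim\ve^{d+2}$, while $\phi_\ve$ --- a modulated tensor product of one-dimensional sinc kernels --- satisfies $\|\phi_\ve\|_{L^{p+1}}^{p+1}\sim\ve^{dp}$ for every $p>0$. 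Hence the ratio of the two sides of \eqref{gns:10} is of order $\ve^{\frac12(d(p-1)-4)}$, which tends to $+\infty$ as $\ve\to0^+$ exactly when $p<1+\tfrac4d$; so \eqref{gns:10} cannot hold for any constant in that range, giving $p_*(d)\ge1+\tfrac4d$. (Replacing the cube by a Knapp plate of dimensions $\ve\times\sqrt\ve\times\cdots\times\sqrt\ve$ tangent to $\Sigma_b$ sharpens this to $p_*(d)\ge1+\tfrac8{d+1}$.)

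\smallskip
\noindent\emph{Validity near the critical exponent.}
Write $\widehat\phi=\widehat u+\widehat v$, with $\widehat v$ supported where $\abs{|\xi|-\sqrt{b/2}}\ge1$ and $\widehat u$ near $\Sigma_b$; the two sides of \eqref{gns:20} are additive in $u,v$ by disjointness of frequency supports. On $\mathrm{supp}\,\widehat v$ one has $(|\xi|^2-\tfrac b2)^2\gtrsim|\xi|^4+1$, so the form dominates $\|\De v\|_{L^2}^2+\|v\|_{L^2}^2$ and the standard biharmonic GNS inequality, valid for $1<p<1+\tfrac8d$ (cf. Lemma~\ref{le:10}), dispatches $v$. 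For $u$ I would decompose dyadically in the distance to $\Sigma_b$: $u=\sum_{j\ge0}u_j$ with $\widehat{u_j}$ on the shell $\{2^{-j-1}\le\abs{|\xi|-\sqrt{b/2}}<2^{-j}\}$, so that $u_j$'s contribution to the form is $\sim 2^{-2j}\|u_j\|_{L^2}^2$. As $\widehat{u_j}$ lives in a $2^{-j}$-neighbourhood of a fixed sphere, the thin-shell Stein--Tomas estimate gives $\|u_j\|_{L^q(\rd)}\lesssim 2^{-j/2}\|u_j\|_{L^2}$ for $q\ge\tfrac{2(d+1)}{d-1}$; interpolating with $L^2$ (or, above that exponent, with Bernstein's $L^\infty$ bound) yields $\|u_j\|_{L^{p+1}}^{p+1}\lesssim 2^{-j\sigma(p,d)}\|u_j\|_{L^2}^{p+1}$ with $\sigma(p,d)=\min\{\tfrac{(d+1)(p-1)}4,\tfrac{p+1}2\}$. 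Once $\sigma(p,d)\ge2$, each block obeys $\|u_j\|_{L^{p+1}}^{p+1}\lesssim(2^{-2j}\|u_j\|_{L^2}^2)\,\|u_j\|_{L^2}^{p-1}$; summing in $j$ by a square-function bound for the lacunary shells, together with $\sum_j2^{-2j}\|u_j\|_{L^2}^2=\int_{\rd}(|\De u|^2-b|\nabla u|^2+\tfrac{b^2}4u^2)$ and $\sum_j\|u_j\|_{L^2}^2=\|u\|_{L^2}^2$, should complete the verification of \eqref{gns:20} for $u$, hence for $\phi$, in the range $1+\tfrac8{d+1}\le p<1+\tfrac8d$. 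Tracking the condition $\sigma(p,d)\ge2$ against the window $p<1+\tfrac8d$ then pins $p_*(d)$ to the interval \eqref{g:10}, and a brief interpolation argument shows the set of admissible $p$ is an interval, so $p_*(d)$ and the stated dichotomy are well posed.

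\smallskip
\noindent\emph{Where the difficulty lies.}
The hard part is the summation over dyadic shells in the positive direction: the $2^{-j/2}$ gain per shell is precisely the Stein--Tomas gain and is exactly what forces the number $1+\tfrac8{d+1}$, so one needs a genuine square-function (Rubio de Francia-type) inequality adapted to the lacunary annuli $\{2^{-j-1}\le\abs{|\xi|-\sqrt{b/2}}<2^{-j}\}$ to pass from the per-shell estimates to a bound on $u=\sum_ju_j$ without bleeding powers of $j$. A secondary point is matching the $L^q$-range of the thin-shell restriction estimate against the admissible interval $p<1+\tfrac8d$, which is precisely where the dimension $d$ feeds into the numerical value of $p_*(d)$. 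By contrast the scaling computation of the failure half and the Young-inequality bookkeeping are routine --- essentially those already behind the proof of Lemma~\ref{le:10}.
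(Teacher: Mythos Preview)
Your approach is essentially the same as the paper's: both rewrite the right-hand side on the Fourier side as $\int(|\xi|^2-\tfrac b2)^2|\widehat\phi|^2\,d\xi$, split into frequencies near and far from the sphere $|\xi|^2=\tfrac b2$, handle the far piece by Bernstein/standard GNS, and attack the near piece by a dyadic shell decomposition combined with the Stein--Tomas bound $\|P_mf\|_{L^{q_d}}\lesssim 2^{-m/2}\|f\|_{L^2}$, $q_d=\tfrac{2(d+1)}{d-1}$, interpolated down to $L^{p+1}$.  Two points are worth flagging.

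\textbf{The summation is easier than you fear.}  You identify the ``hard part'' as assembling $\sum_j u_j$ via a Rubio de Francia--type square function adapted to the lacunary shells.  This is unnecessary.  The paper simply uses the triangle inequality $\|\sum_m P_m f\|_{L^{p+1}}\le\sum_m\|P_m f\|_{L^{p+1}}$, then the per-shell bound $\|P_m f\|_{L^{p+1}}\le C\,2^{-\beta m}\|P_m f\|_{L^2}$ with $\beta>\tfrac2{p+1}$, and closes with an elementary H\"older step in $m$: writing $a_m:=\|P_m f\|_{L^2}$ and using $a_m\le\|f\|_{L^2}$,
\[
\sum_m 2^{-\beta m}a_m\le \|f\|_{L^2}^{\frac{p-1}{p+1}}\sum_m 2^{-\beta m}a_m^{\frac{2}{p+1}}
\le C\,\|f\|_{L^2}^{\frac{p-1}{p+1}}\Bigl(\sum_m 2^{-2m}a_m^2\Bigr)^{\frac1{p+1}},
\]
the last inequality by H\"older in $m$ with the convergent factor $\bigl(\sum_m 2^{-(\beta-\frac2{p+1})m\cdot\frac{p+1}{p}}\bigr)^{p/(p+1)}$.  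No square function is needed.

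\textbf{The Knapp parenthetical overreaches.}  Your cube example cleanly reproduces the paper's lower bound $p_*(d)\ge 1+\tfrac4d$; the paper obtains the same bound by testing with the radial shell kernel $K_m$ and its Bessel asymptotics, which is more laborious but equivalent.  Your parenthetical claim that a Knapp plate sharpens this to $p_*(d)\ge 1+\tfrac8{d+1}$ is more than the Proposition asks: combined with the Stein--Tomas upper bound it would force $p_*(d)=1+\tfrac8{d+1}$ exactly, whereas the paper explicitly remarks that pinning down $p_*(d)$ for $d\ge2$ ``appears to be a hard problem in Fourier analysis, closely related to the restriction conjecture,'' and reports numerics suggesting $p_*(2)\approx 3.3\neq\tfrac{11}{3}$.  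The Knapp scaling computation you sketch is sound as far as it goes, so this tension is genuinely interesting---but for the purposes of the Proposition you only need the cube example, and you should not rest the argument on a claim that would resolve what the authors regard as open.
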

{\bf Remark:} The value of $p_*(1)=5$ was computed in \cite{atanas}. Finding the exact value of $p_*(d), d\geq 2$ appears to be a hard problem in Fourier analysis, closely related to the restriction conjecture. Even in our proof of the upper bound in  \eqref{g:10}, we use the full strength of the Stein-Tomas restriction theorem in two spatial dimensions (see for example p. 784, \cite{G1})
which does not appear to be enough to determine $p_*(d)$.
Proposition \ref{prop:20} allows us to prove Theorem \ref{theo:10}; see Section \ref{sec:3.4} below.

Next, we present the relevant GNS results (or lack thereof)  in the  anisotropic case.  The result is much more definite than its counterpart Proposition \ref{prop:20}.
\begin{proposition}
	\label{prop:25}
	Let $b>0$. For every $d\geq 2$, 	and  for all $1<p\leq 1+\f{4}{d}$, the following estimate
	\begin{equation}
	\label{gns:12}
	\|\phi\|_{L^{p+1}(\rd)}^{p+1}\leq C \|\phi\|_{L^2}^{p-1}
	\left(\int_{\rd} [|\De \phi|^2 - b |\p_{x_1} \phi|^2 + \f{b^2}{4} \phi^2]dx\right)
	\end{equation}
	cannot hold for a given constant $C$ and all test functions $\phi$.
	
	On the other hand, for $1+\f{8}{d}>p>1+\f{4}{d}$, there exists a constant $C=C_{p,d,b}$, so that
	\begin{equation}
	\label{gns:22}
	\|\phi\|_{L^{p+1}(\rd)}^{p+1}\leq C_{p,d} \|\phi\|_{L^2}^{p-1}
	\left(\int_{\rd} [|\De \phi|^2 - b |\p_{x_1} \phi|^2 + \f{b^2}{4} \phi^2]dx\right)
	\end{equation}
\end{proposition}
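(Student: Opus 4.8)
The plan rests on identifying explicitly the Fourier multiplier of the quadratic form on the right of \eqref{gns:12}. Writing $x=(x_1,x')$ and $\xi=(\xi_1,\xi')$ with $x',\xi'\in\R^{d-1}$, Plancherel gives
\begin{equation*}
\int_{\rd}\Big[|\De\phi|^2-b|\p_{x_1}\phi|^2+\tfrac{b^2}{4}\phi^2\Big]\,dx=\int_{\rd}\sigma(\xi)\,|\wh\phi(\xi)|^2\,d\xi,\qquad \sigma(\xi):=|\xi|^4-b\xi_1^2+\tfrac{b^2}{4}=\big(|\xi|^2-\tfrac b2\big)^2+b|\xi'|^2 .
\end{equation*}
Thus $\sigma\ge0$ and $\sigma$ vanishes \emph{only} at the two poles $\pm\xi^*$, $\xi^*:=(\sqrt{b/2},0,\dots,0)$, where it is nondegenerate: with $\eta=\xi-\xi^*$ one computes $\sigma(\xi^*+\eta)=(\sqrt{2b}\,\eta_1+|\eta|^2)^2+b|\eta'|^2$, whence $\tfrac b2|\eta|^2\le\sigma(\xi^*+\eta)$ for $|\eta|^2\le b/4$ (using $(\sqrt{2b}\,\eta_1+|\eta|^2)^2\ge b\eta_1^2-2|\eta|^4$), while $\sigma(\xi^*+\eta)\le C_b(|\eta|^2+|\eta|^4)$ for all $\eta$. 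So near $\pm\xi^*$ the form behaves exactly like the Laplacian symbol $|\eta|^2$ — this is the crucial difference with the isotropic Proposition~\ref{prop:20}, whose symbol $(|\xi|^2-b/2)^2$ vanishes on an entire sphere — and away from the poles $\sigma$ is bounded below, with quartic growth at infinity since $\sigma(\xi)\ge\tfrac12|\xi|^4-\tfrac{b^2}{4}$.

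For the negative assertion I would test against modulated, slowly varying bumps. Fix $\Psi\in\cs(\rd)$ with $\|\Psi\|_{L^{p+1}}\ne 0$ and set $\phi_n(x):=e^{i\xi^*\cdot x}\Psi(x/n)$. Then $\|\phi_n\|_{L^{p+1}}^{p+1}=n^{d}\|\Psi\|_{L^{p+1}}^{p+1}$, $\|\phi_n\|_{L^2}^{2}=n^{d}\|\Psi\|_{L^2}^{2}$, whereas $\wh{\phi_n}$ is a bump of width $\sim 1/n$ about $\xi^*$, so by the upper bound on $\sigma$,
\begin{equation*}
\int_{\rd}\sigma(\xi)\,|\wh{\phi_n}(\xi)|^2\,d\xi = n^{d}\!\int_{\rd}\sigma\!\big(\xi^*+\zeta/n\big)\,|\wh\Psi(\zeta)|^2\,d\zeta\ \le\ \frac{C_b\,n^{d}}{n^{2}}\Big(\|\nabla\Psi\|_{L^2}^2+\|\De\Psi\|_{L^2}^2\Big).
\end{equation*}
Substituting into \eqref{gns:12} forces $n^{d}\le C' n^{d(p-1)/2+d-2}$ for every $n\ge1$, i.e. $p\ge 1+\tfrac4d$; letting $n\to\infty$ when $1<p<1+\tfrac4d$ is the contradiction. (At the borderline $p=1+\tfrac4d$ both sides scale identically and the sharp Gagliardo–Nirenberg constant enters, so that case requires a separate refinement.)

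For the positive assertion, when $1+\tfrac4d<p<1+\tfrac8d$, I would split $\wh\phi=\wh{\phi_{\rm n}}+\wh{\phi_{\rm f}}$ via a smooth partition of unity with $\wh{\phi_{\rm n}}$ supported in $B(\xi^*,2\de)\cup B(-\xi^*,2\de)$ for a fixed small $\de$, and treat the two pieces differently. On $\mathrm{supp}\,\wh{\phi_{\rm f}}$ one has $\sigma(\xi)\ge c_\de(1+|\xi|^4)$ (combine strict positivity of $\sigma$ on the compact part away from $\pm\xi^*$ with $\sigma(\xi)\ge\tfrac12|\xi|^4-\tfrac{b^2}{4}$ for $|\xi|$ large), hence $\int\sigma|\wh{\phi_{\rm f}}|^2\gtrsim_\de\|\phi_{\rm f}\|_{L^2}^2+\|\De\phi_{\rm f}\|_{L^2}^2$, and the biharmonic Gagliardo–Nirenberg estimate behind \eqref{112} (valid for $p<1+\tfrac8d$) gives $\|\phi_{\rm f}\|_{L^{p+1}}^{p+1}\lesssim_\de\|\phi\|_{L^2}^{p-1}\int\sigma|\wh\phi|^2$. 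For the near piece, writing the part supported about $\xi^*$ as $e^{i\xi^*\cdot x}\psi$ with $\wh\psi$ supported in $B(0,2\de)$ (and similarly at $-\xi^*$), the lower bound $\sigma(\xi^*+\eta)\ge\tfrac b2|\eta|^2$ on $\{|\eta|\le2\de\}$ yields $\int\sigma|\wh{\phi_{\rm n}}|^2\gtrsim\|\nabla\psi\|_{L^2}^2$; since $\psi$ is frequency localized, Sobolev embedding $\dot H^{s}\hookrightarrow L^{p+1}$ with $s=d(\tfrac12-\tfrac1{p+1})$, together with the interpolation and Bernstein inequalities for band-limited functions, converts the excess power of $\|\nabla\psi\|_{L^2}$ (which arises precisely because $p>1+\tfrac4d$) into powers of $\|\psi\|_{L^2}$, giving $\|\psi\|_{L^{p+1}}^{p+1}\lesssim_\de\|\psi\|_{L^2}^{p-1}\|\nabla\psi\|_{L^2}^2\lesssim_\de\|\phi\|_{L^2}^{p-1}\int\sigma|\wh\phi|^2$. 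Adding the two contributions and using $\|\phi\|_{L^{p+1}}\le\|\phi_{\rm n}\|_{L^{p+1}}+\|\phi_{\rm f}\|_{L^{p+1}}$ finishes the proof.

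The main obstacle is the near-pole piece of the positive part: one must use that $\sigma$ is genuinely quadratic (and no more degenerate) at $\pm\xi^*$ — which is exactly where the anisotropic term $b|\xi'|^2$ is indispensable — and that the frequency localization of that piece lets Bernstein absorb the mismatch between the ``$\|\nabla\cdot\|^2$-type'' control supplied by $\sigma$ near the poles and the fact that $p$ exceeds the $L^2$-critical power $1+\tfrac4d$. The only genuinely fiddly bookkeeping is verifying that $\de$ (hence all implied constants) can be fixed once and for all, independently of $\phi$.
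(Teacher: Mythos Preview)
Your proposal is correct and follows essentially the same strategy as the paper: compute the Fourier symbol of the quadratic form, observe that it vanishes only at the two isolated points $\pm\xi^*$ (where it is nondegenerate quadratic), and conclude that the near-pole analysis reduces to the standard Laplacian GNS threshold $1+\tfrac4d$, while the far-from-pole piece is governed by the biharmonic threshold $1+\tfrac8d$.

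A few minor points of comparison. Your factorization $\sigma(\xi)=(|\xi|^2-\tfrac b2)^2+b|\xi'|^2$ is cleaner than the paper's expansion $h(\xi)=(\xi_1^2-1)^2+|\xi'|^4+2\xi_1^2|\xi'|^2$ (these coincide for $b=2$) and makes the isolated-zero structure immediately transparent. The paper implements the near-pole positive estimate via an explicit Littlewood--Paley decomposition and H\"older in the dyadic shells; your ``Sobolev embedding plus Bernstein for band-limited functions'' sketch achieves the same thing and can be made precise as follows: with $s=d(\tfrac12-\tfrac1{p+1})$ one has, by H\"older on the Fourier side,
\[
\|\psi\|_{\dot H^s}^{p+1}\le \|\psi\|_{L^2}^{p-1}\int|\eta|^{d(p-1)/2}|\wh\psi(\eta)|^2\,d\eta,
\]
and for $\wh\psi$ supported in $\{|\eta|\le 2\de\}$ with $p>1+\tfrac4d$ the last integral is $\le(2\de)^{d(p-1)/2-2}\|\nabla\psi\|_{L^2}^2$. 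This handles all $p\in(1+\tfrac4d,1+\tfrac8d)$ uniformly, without any Sobolev-exponent restriction. Your test functions $\phi_n=e^{i\xi^*\cdot x}\Psi(x/n)$ are the physical-space version of the paper's Fourier bumps $\wh f(\xi)=\chi(2^k(\xi_1-1),2^k\xi')$.

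Your only deferred point is the endpoint $p=1+\tfrac4d$. The paper's own treatment of this endpoint is brief (and in fact the proof there establishes that the reduced inequality \eqref{jo} \emph{holds} at $p=1+\tfrac4d$, in mild tension with the $\le$ in the Proposition's first clause), so your caution there is appropriate.
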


 In the remainder of this section, we present some  preparatory material for the proofs of Proposition \ref{prop:20} and Proposition \ref{prop:25}.
To this end, we use the formula for the Fourier transform and its inverse as follows
$$
\hat{f}(\xi)=  (2\pi)^{-d/2}  \int_{\rd} f(x) e^{-   i x\cdot \xi} dx, \ \ f(x)=  (2\pi)^{-d/2}
\int_{\rd} \hat{f}(\xi)  e^{  i x\cdot \xi} d\xi
$$
The Plancherel's theorem states that $\|f\|_{L^2}= \|\hat{f}\|_{L^2}$. We will also make frequent use of the Bernstein inequality: for every $1\leq p\leq q\leq \infty$ and every finite volume set $A\subset \rd$, there exists $C=C_d$, so that
$$
\|P_A f\|_{L^q}\leq C |A|^{\f{1}{p}-\f{1}{q}} \|f\|_{L^p}.
$$
where $\widehat{P_A f}(\xi) =\chi_A (\xi) \hat{f}(\xi)$.

We are now ready to proceed to the specifics of the isotropic case.
\subsection{Proof of Proposition \ref{prop:20}}
In consideration of the estimates \eqref{gns:20}, one can straightforwardly
rescale to the case $b=2$, which we will henceforth
use in our considerations for simplicity
(although when completing the proof of our theorems in section
IV below, we will present them for arbitrary $b$).
Using Fourier transformation and  Plancherel's theorem, we can rewrite
\begin{eqnarray*}
\int_{\rd} [|\De \phi|^2 - 2 |\nabla\phi|^2 + \phi^2]dx=
\int_{\rd} |\hat{f}(\xi)|^2 ( |\xi|^2- 1)^2 d\xi.
\end{eqnarray*}
Further, one can use smooth  decompositions near $|\xi|=1$  to study \eqref{gns:20}. More concretely,  introduce  a function $\psi\in C^\infty_0(\rone)$, so that $\psi(z)=1, |z|<1$ and $\psi(z)=0, |z|>2$. Then,
let  $\chi(z)=\psi(z)-\psi(2z)$,
so that $supp \chi\subset (\f{1}{2}, 2)$ and
$\sum_{j=-\infty}^\infty \chi(2^j z)=1, z\neq 0$. Now, introduce two multipliers
$$
\widehat{Q_j f}(\xi):=\chi(2^{-j}(|\xi|^2-1)) \hat{f}(\xi),
\widehat{P_m f}(\xi):=\chi(2^m(|\xi|^2-1)) \hat{f}(\xi),
$$
and the corresponding versions $Q_{>j}:=\sum_{l>j} Q_l$, $Q_{\sim j} =Q_{j-1}+Q_j+Q_{j+1}$ and so on.
Based on the relevant decomposition,
$$
Id=\sum_{j=0}^\infty Q_j + \sum_{m>0} P_m,
$$
and  $Q_{j}, j\geq 3$ Fourier restricts to a region $|\xi|\sim 2^{j/2}$. We henceforth adopt the notation, $A\sim B$, for two positive quantities that satisfy  if $\f{1}{4} A\leq B \leq 4 A$.

 It is actually not hard to come up with necessary and sufficient conditions on $p$
so that \eqref{gns:20} holds, where $f$ is replaced by $Q_{>3} f$.
\subsubsection{Estimates away from $|\xi|=1$}
We can estimate by Sobolev embedding (or rather Bernstein inequality)
\begin{eqnarray}
\label{il:10}
\|Q_j f\|_{L^{p+1}(\rd)}^{p+1} \leq C 2^{j\f{(p-1)d}{4}}  \|Q_j f\|_{L^2(\rd)}^{p+1}.
\end{eqnarray}
Computing the right-hand side of \eqref{gns:20} (with $b=2$),  on the other hand, yields
$
2^{2j} \|Q_j f\|_{L^2(\rd)}^{p+1}.
$
One can now show \eqref{gns:20} for $Q_{>3} f$, when $1<p<1+\f{8}{d}$.  Indeed, by the triangle inequality,  \eqref{il:10}
\begin{eqnarray*}
\|Q_{>3} f\|_{L^{p+1}(\rd)} &\leq & C \sum_{j>3} 2^{j\f{(p-1)d}{4(p+1)}}
 \|Q_j f\|_{L^2(\rd)}\leq C \|f\|^{\f{p-1}{p+1}} \left(\sum_{j>3} \|Q_j f\|_{L^2}^2
2^{j\f{(p-1)d}{4}} \right)^{\f{1}{p+1}}\leq C \|f\|^{\f{p-1}{p+1}} \left(\sum_{j>3} \|Q_j f\|_{L^2}^2
2^{2j} \right)^{\f{1}{p+1}}\\
&\leq & C \|f\|^{\f{p-1}{p+1}} \left(\int_{\rd} [|\De f|^2 - 2 |\nabla f|^2 + f^2]dx\right)^{\f{1}{p+1}}.
\end{eqnarray*}
where we have used $\f{(p-1)d}{4}<2$ and $||\xi|^2-1|\sim 2^{j}$ on the support of the multiplier $Q_j$.

The situation is much more delicate for frequencies close to the sphere $|\xi|=1$, that is for the multipliers $P_m, m>>1$.
\subsubsection{Estimates near $|\xi|=1$}
Clearly, one has, by Bernstein inequalities, the estimates $\|P_m f\|_{L^{p+1}}\leq C_m \|f\|_{L^2}$, so the issue is the control of $P_{>m}$ for a fixed $m$.
We claim that the central issue here is the exact bound in the estimate
$\|P_m f\|_{L^{p+1}(\rd)}\leq C \|f\|_{L^2}$. More precisely, define
\begin{equation}
\label{ex:10}
\al(p,d)=\sup\{\al:  \limsup\limits_{m\to \infty} \sup_{\|f\|_{L^2}=1}
2^{\al m} \|P_m f\|_{L^{p+1}(\rd)}<\infty \}.
\end{equation}
Note that by the uniform boundedness principle and the definition of $\al(p,d)$,  for every $\be>\al(p,d)$, there is a $f_\be: \|f_\be\|_{L^2}=1$, so that
\begin{equation}
\label{ex:20}
\limsup\limits_{m\to \infty}
2^{\be m} \|P_m f\|_{L^{p+1}(\rd)}=\infty.
\end{equation}
For convenience, we drop the dependence on the dimension $d$ in $\al(p,d)$. Note that by the Bernstein's inequality $\al(p)>0$, in fact $\|P_m f\|_{L^{p+1}}\leq C 2^{-m(\f{1}{2}-\f{1}{p+1})} \|f\|_{L^2}$, whence
$\al(p,d)\geq (\f{1}{2}-\f{1}{p+1})$. For the same reasons, it is clear that $p\to \al(p)$ is an increasing function. In addition $p\to \al(p)$ is a continuous function and $\al(1)=0$.

A convenient characterization of $\al(p,d)$ is the following:   for every $\eps>0$, there is $C_\eps$, so that
\begin{equation}
\label{ex:35}
\|P_m f\|_{L^{p+1}(\rd)}\leq C_\eps 2^{(-\al(p,d)+\eps)m} \|f\|_{L^2(\rd)}.
\end{equation}
We claim that  $\al(p,d)$ determines the value of $p_*(d)$ in Proposition \ref{prop:20}.

 In fact, we claim that $p_*(d)$ is the unique solution of the equation $\al(p,d)=\f{2}{p+1}$. We now prove this claim. First, we show that this equation has a unique solution. To start with, the continuous function $h(p):=\al(p)-\f{2}{p+1}$ is increasing, with $h(1)=-1<0$, while
 $h(p)\geq \f{1}{2}-\f{3}{p+1}>0$ for $p\geq 5$, so there will be a solution\ $p\in (1,5)$. In fact, below we show better bounds on $\al(p,d)$, which
  imply existence of solutions in the interval of interest, namely $(1,1+\f{8}{d})$, but the existence of solutions anywhere in $(1, \infty)$ will suffice for now.

 Next, we show that for $p<p_*(d)$, \eqref{gns:10} holds.  This means that $\al(p)-\f{2}{p+1}<0$.  Introduce $\be>\al(p)$, so that $\be<\f{2}{p+1}$. According to the remarks made earlier,
 this allows us to find a function $f_\be:\|f_\be\|_{L^2}=1$, so that \eqref{ex:20} holds true. Assume then, for a contradiction, that \eqref{gns:10} holds for some constant $C$. This means that for all $\phi\neq 0$,
 \begin{equation}
 \label{ex:30}
  \f{\|\phi\|_{L^{p+1}}}
 {\|\phi\|_{L^2}^{\f{p-1}{p+1}} \left(\int [|\De \phi|^2 - 2 |\nabla\phi|^2 +   \phi^2]dx\right)^{\f{1}{p+1}}}\leq C.
 \end{equation}
 In particular, taking into account the properties of $P_m$ and our earlier calculations with regards to the quantity in the denominator of \eqref{ex:30}, we can take $\phi_m=P_m f_\be$,
  \begin{equation}
  \label{ex:40}
  \sup_m  \f{\|P_m f_\be\|_{L^{p+1}}}
  {\|P_m f_\be\|_{L^2} 2^{-\f{2 m}{p+1}}}\leq C.
  \end{equation}
  Now, since $\|P_m f_\be\|_{L^2}\leq \|f_\be\|_{L^2}=1$, it follows that
  $\sup_m  \|P_m f_\be\|_{L^{p+1}} 2^{\f{2 m}{p+1}}\leq C$. But this is a contradiction with \eqref{ex:20}, since
  $$
  \|P_m f_\be\|_{L^{p+1}} 2^{\be m} 2^{m(\f{2}{p+1}-\be)} =  \|P_m f_\be\|_{L^{p+1}} 2^{\f{2 m}{p+1}}\leq C,
  $$
whereas on the left hand side $\limsup_m  \|P_m f_\be\|_{L^{p+1}} 2^{\be m} =\infty$, and $\lim_m 2^{m(\f{2}{p+1}-\be)} =\infty$.

Assume now $p>p_*(d)$, so $\al(p)>\f{2}{p+1}$. So, we can find $\be:
\al(p)>\be>\f{2}{p+1}$. Then, we have the estimate, (see \eqref{ex:35}, but applied to $P_m^2 f$)
\begin{equation}
\label{ex:37}
\|P_m f\|_{L^{p+1}(\rd)}\leq C_\eps 2^{-\be m} \|P_m f\|_{L^2(\rd)}.
\end{equation}
We will show that \eqref{gns:20} holds. In view of the estimates away from $|\xi|=1$, which establish \eqref{gns:20} for $Q_{>3} f$, it suffices to consider $P_{>m_0} f$ for $m_0$ sufficiently large only. We take $m_0=10$ for concreteness. By \eqref{ex:37}, we have
\begin{eqnarray*}
\|P_{>10} f\|_{L^{p+1}} &\leq & \sum_{m>10} \|P_{m} f\|_{L^{p+1}}\leq
C \sum_{m>10} 2^{-\be m} \|P_{m} f\|_{L^2}  \leq  C \|f\|_{L^2}^{\f{p-1}{p+1}} \left(\sum_{m>10} \|P_{m} f\|_{L^2}^2 2^{-2m} \right)^{\f{1}{p+1}} \leq \\
&\leq & C \|f\|_{L^2}^{\f{p-1}{p+1}}  \left(\int [|\De f|^2 - 2 |\nabla f|^2 +   f^2]dx\right)^{\f{1}{p+1}}.
\end{eqnarray*}
This establishes \eqref{gns:20} and so the estimates (or lack thereof) in Proposition \ref{prop:20} are  established in full.

We now focus our attention on appropriate estimates on $p_*(d)$.
\subsubsection{Estimates on the value of $p_*(d)$}
As we saw above, the value $p_*(d)$ is intimately related to the precise estimates of $P_m:L^2(\rd) \to L^{p+1}(\rd)$. Recall that there was the trivial bound based on the Bernstein's inequality, $\al(p,d)\geq (\f{1}{2}-\f{1}{p+1})$,  but we now aim for a much more sophisticated one.
Before we proceed,
we need to introduce some quantities that will be helpful in our considerations. The surface measure on $\sd$  is defined via $d \si(x)=\de(|x|^2-1)$ and its Fourier transform is  (see \cite{G1}, Appendix B.4)
$$
\widehat{d\si}(\xi)=(2\pi)^{-d/2} \int_{\sd} e^{- i \xi\cdot  \theta} d\theta=c_d \f{J_{\f{d-2}{2}}(|\xi|)}{|\xi|^{\f{d-2}{2}}}=: S(\xi).
$$
where $c_d$ is a constant and $J_n$ are the standard Bessel functions. Furthermore, see Appendix B.5, \cite{G1}, for any radial function $f(x)=f_0(|x|)$, one can compute its Fourier transform as follows
$$
\hat{f}(\xi)= C_d |\xi|^{-\f{d-2}{2}} \int_0^\infty f_0(r) J_{\f{d-2}{2}}(r|\xi|) r^{\f{d}{2}} dr.
$$
In this way, when we take the multipliers associated to $P_m$, namely $f_0(r)=\chi(2^m (r^2-1))$, we see that its kernel $K_m$ (i.e., $P_m f=f*K_m$)
can be expressed in terms of an averaging operator involving the kernel $S=\widehat{d\si}$ as follows
$$
K_m(x)= C_d \int_0^\infty \chi(2^m (r^2-1)) \f{J_{\f{d-2}{2}}(r|x|)}{|x|^{\f{d-2}{2}}} r^{\f{d}{2}} dr=C_d \int_0^\infty
\chi(2^m (r^2-1))  r^{d-1} S(r|x|) dr.
$$
Let us now fix $q>2$. We wish to establish an estimate for the operator norm $P_m: L^2\to L^q$.
Note that $P_m$ is trivially bounded, but the  issue is to  determine precise bounds on the norm, as a function of $m$.   Due to the fact that $\chi$ is real-valued, $P_m: L^{q'}\to L^2$, so in order to compute $\|P_m\|_{B(L^2\to L^q)}$, we might instead consider $P_m^2:L^{q'}\to L^q$ and in addition
$$
\|P_m\|_{B(L^2\to L^q)}=\sqrt{\|P_m^2\|_{B(L^{q'}\to L^q)}}.
$$
Now, $P_m^2 f=\tilde{K}_m*f$, $\tilde{K}_m=C_d \int_0^\infty
\chi^2(2^m (r^2-1))  r^{d-1} S(r|x|) dr$. The advantage in this formulation is that the mapping properties of the operator $f\to f*\widehat{d\si}=f*S$ are well-understood. In fact,  this is the content of the celebrated Stein-Tomas theorem. To summarize, (see $(10.4.7)$, p. 784, \cite{G1}), there is the  estimate
\begin{equation}
\label{st}
\|f*S\|_{L^{q}(\rd)}\leq C \|f\|_{L^{q'}(\rd)}, q_d=\f{2(d+1)}{d-1}
\end{equation}
With this value of $q_d$, we then conclude that since
$
f*\tilde{K}_m=C_d \int_0^\infty
\chi^2(2^m (r^2-1))  r^{d-1} [S(r|\cdot|) *f] dr,
$
we have the estimate, based on the Stein-Tomas bound \eqref{st},
$$
\|P_m^2 f\|_{L^q}=\|f*\tilde{K}_m\|_{L^q}\leq C 2^{-m} \|f\|_{L^{q'}}.
$$
Note here that the factor $2^{-m}$ is gained through the integration in $r$, while the estimate for the term
$\|[S(r|\cdot|) *f]\|_{L^q}$ comes from \eqref{st}.
Accordingly, this gives the estimate
\begin{equation}
\label{o:19}
\|P_m f\|_{L^{q_d}}\leq C 2^{-m/2} \|f\|_{L^2}, q_d=\f{2(d+1)}{d-1}.
\end{equation}
Interpolating this estimate with the trivial one $\|P_m f\|_{L^2}\leq C \|f\|_{L^2}$, we conclude
that for every $2\leq q\leq \f{2(d+1)}{d-1}$, there is
\begin{equation}
\label{o:20}
\|P_m f\|_{L^q}\leq C 2^{-m \f{d+1}{2}\left(\f{1}{2}-\f{1}{q}\right)} \|f\|_{L^2}.
\end{equation}
It follows that
\begin{equation}
\label{o:25}
\al(p,d)\geq \f{d+1}{2}\left(\f{1}{2}-\f{1}{p+1}\right).
\end{equation}
In particular, it is clear that $\al(p,d)-\f{2}{p+1}>0$, if $p>1+\f{8}{d+1}$, which means that we have established the upper bound $p_*(d)< 1+\f{8}{d+1}$.

In order to establish the lower bound for $p_*(d)$, we test the ratio
$
\f{\|P_m f\|_{L^r(\rd)}}{\|f\|_{L^2(\rd)}}
$
for $r>2$, with  $f=K_m$, defined above. For the correct asymptotics, we need to recall (see Appendix B.6, \cite{G1})  that for every $r>>1$,
$$
J_k(r)=c \f{\cos(r-\f{\pi k}{2}-\f{\pi}{4})}{\sqrt{r}}+O(r^{-3/2}).
$$
Now,
\begin{eqnarray*}
\tilde{K}_m(x) &=&C_d \int_0^\infty
\chi^2(2^m (r^2-1))  r^{d-1} S(r|x|) dr=  \\
&=& const. |x|^{-\f{d-1}{2}} \int_0^\infty
\chi^2(2^m (r^2-1))  r^{d-1}  \f{\cos(r|x|-\f{\pi (d-2)}{4}-\f{\pi}{4})}{r^{\f{d-1}{2}}}dr + 2^{-m} O(|x|^{-\f{d+1}{2}}).
\end{eqnarray*}
It is then easy to see  that for $2^{-m}<<\de<<1$ and $|x|\sim \de 2^{m}, m>>1$, in the integral above there is the approximate formula
$$
\cos\left(r|x|-\f{\pi (d-2)}{4}-\f{\pi}{4}\right)= \cos\left(|x|-\f{\pi (d-2)}{4}-\f{\pi}{4}\right)+ O(\de).
$$
This implies that for a fixed portion of the set $|x|\sim \de 2^{m}$, $\cos(r|x|-\f{\pi (d-2)}{4}-\f{\pi}{4})\geq \f{1}{2}$, whence we have that
$\tilde{K}_m$ obeys, on this fixed portion of the set, the bound
$
|\tilde{K}_m(x)|\gtrsim 2^{-m\f{d+1}{2}}.
$
Thus,
$$
\|P_m f\|_{L^r(\rd)}\geq c 2^{-\f{m}{2}} 2^{-md\left(\f{1}{2}-\f{1}{r}\right)}
$$
while by Plancherel's theorem
$$
\|f\|_{L^2}=\|K_m\|_{L^2}=(\int_{\rd} |\chi(2^m(|\xi|^2-1))|^2 d\xi)^{\f{1}{2}}\sim 2^{-\f{m}{2}}.
$$
Thus,
$$
\f{\|P_m f\|_{L^r(\rd)}}{\|f\|_{L^2(\rd)}} \geq c 2^{-md\left(\f{1}{2}-\f{1}{r}\right)}.
$$
It follows that one has the inequality complementary to \eqref{o:25},
\begin{equation}
\label{o:27}
\al(p,d)\leq d\left(\f{1}{2}-\f{1}{p+1}\right).
\end{equation}
We can now derive an estimate for $p_*(d)$. Indeed,
$$
\al(p,d)-\f{2}{p+1}\leq d\left(\f{1}{2}-\f{1}{p+1}\right)-\f{2}{p+1}<0,
$$
if $p<1+\f{4}{d}$. Thus, we conclude that $p_*(d)>1+\f{4}{d}$. This finishes the proof of Proposition \ref{prop:20}.

Our next goal is to analyze  the relevant GNS inequalities in the non-isotropic case.

\subsection{The anisotropic case: Proof of    Proposition \ref{prop:25}}
Again, a simple rescaling argument reduces matters to the case $b=2$, as in the proof of Proposition \ref{prop:20}.
The arguments for the anisotropic case are pretty similar, once we realize the important differences in the dispersion relations. More specifically, using Plancherel's theorem
in this case:
\begin{eqnarray*}
\int_{\rd} [|\De \phi|^2 - 2 |\p_{x_1} \phi|^2 +  |\phi|^2]dx =
\int_{\rd} |\hat{\phi}(\xi)|^2[(\xi_1^2-1)^2+|\xi'|^4+2\xi_1^2 |\xi'|^2] d\xi,
\end{eqnarray*}
where $\xi'=(\xi_2, \ldots, \xi_d)$. For future reference, introduce the dispersion related function
$h(\xi):= (\xi_1^2-1)^2+|\xi'|^4+2\xi_1^2 |\xi'|^2$. Based on this formula, we discuss the validity of \eqref{gns:12}.

We start our analysis by considering some easy regions. One such region, is when $\xi_1$ is away from $\pm 1$. Quantitatively, $|\xi_1^2-1|\geq \f{1}{100}$ and say $f_0:=P_{|\xi_1^2-1|\geq \f{1}{100}} f$. In this case, we clearly have $h(\xi)\sim 1+ \xi_1^4+|\xi'|^4\sim <\xi>^4$.
In such a scenario, it is easy to analyze $\|f_0\|_{L^{p+1}(\rd)}$, in particular what it takes for \eqref{gns:22} to hold (and \eqref{gns:12} to fail respectively).

More specifically, assuming $1<p<1+\f{8}{d}$, we have
by Bernstein's inequality and Plancherel's
equality
\begin{eqnarray*}
	\|f_0\|_{L^{p+1}(\rd)}^{p+1}  &\leq &  C(\|f_0\|_{L^2}+ \sum_{k=0}^\infty 2^{k d(\f{1}{2}-\f{1}{p+1})}
	\|P_k f_0\|_{L^2})^{p+1} \leq
	C \|f_0\|_{L^2}^{p-1} \sum_{k=0}^\infty 2^{4k} \int_{|\xi|\sim 2^k } |\hat{f_0}(\xi)|^2 d\xi\\
	&\leq & C\|f\|_{L^2}^{p-1} \int_{\rd} |\hat{f_0}(\xi)|^2 h(\xi) d\xi \leq
	C\|f\|_{L^2}^{p-1} \int_{\rd} [|\De f|^2 - 2 |\p_{x_1} f|^2 +  |f|^2]dx.
\end{eqnarray*}
This shows that $1<p<1+\f{8}{d}$ is a sufficient condition for the validity of \eqref{gns:22}, in the case, where $\xi_1$ is away from $\pm 1$. On the other hand, testing \eqref{gns:22} with a function of the type $\hat{f}(\xi)=\vp(2^{-k} \xi)$ for $k>>1$, shows that $1<p<1+\f{8}{d}$ is necessary as well.

We now turn our attention to the more interesting cases, namely $|\xi_1^2-1|\sim 2^{-m}$, $m>>1$. In this case,
$$
h(\xi)\sim 2^{-2m} + |\xi'|^4+|\xi'|^2.
$$
The case $|\xi'|\geq \f{1}{100}$   reduces to $h(\xi)\sim <\xi>^4$, which was just analyzed. So, it remains to consider the cases
$|\xi'|< \f{1}{100}$. So, the dispersion relation will be exactly
$h(\xi)\sim 2^{-2m}+|\xi'|^2$. Further, by changing the Fourier
variables $\xi_1\to \xi_1\pm 1$ (which on the physical side means replacing $f$ with
$f\to e^{\mp i x_1} f$, a harmless operation in terms of all $\|\cdot\|_{L^q}$ norms), we are reduced to studying the question: for which values of $p$ can  the inequality hold
\begin{equation}
\label{jo}
\|f\|_{L^{p+1}}^{p+1} \leq C \|f\|_{L^2}^{p-1} \int_{\rd} |\hat{f}(\xi)|^2 |\xi|^2 d\xi,
\end{equation}
where $f$ is a function $supp \hat{f}\subset \{\xi: |\xi|<<1\}$.

We will now show that $p\geq 1+\f{4}{d}$ is a necessary and sufficient condition for \eqref{jo} to hold. We have already established that $1+\f{8}{d}>p$ is necessary and sufficient for the region away from $\xi_1=\pm 1$, which will of course need to be intersected with the necessary and sufficient condition for \eqref{jo} to hold.

To this end, assume that
$p\geq 1+\f{4}{d}$. Consider $f=\sum_{k=0}^\infty P_{-k}  f $ (recall $supp \hat{f}\subset \{\xi: |\xi|<<1\}$), so by standard properties of the Littlewood-Paley decompositions and the Bernstein's inequality
$$
\|f\|_{L^{p+1}(\rd)}^2\leq C \sum_{k=0}^\infty \|P_{-k}  f\|_{L^{p+1}(\rd)}^2\leq C
\sum_{k=0}^\infty 2^{-2kd(\f{1}{2}-\f{1}{p+1})} \|P_{-k}  f\|_{L^2(\rd)}^2.
$$
Further applying Cauchy-Schwartz
\begin{eqnarray*}
\|f\|_{L^{p+1}(\rd)}^2 &\leq &  C \left(\sum_{k=0}^\infty \|P_{-k} f\|_{L^2}^2 \right)^{\f{p-1}{p+1}}
\left(\sum_k 2^{-2kd(\f{1}{2}-\f{1}{p+1})\f{p+1}{2}} \|P_{-k}  f\|_{L^2(\rd)}^2\right)^{\f{2}{p+1}}\\
&\leq & C \|f\|_{L^2}^{\f{p-1}{p+1}} \left(\sum_k 2^{-kd\f{p-1}{2}} \|P_{-k}  f\|_{L^2(\rd)}^2\right)^{\f{2}{p+1}}.
\end{eqnarray*}
It follows that whenever $p\geq 1+\f{4}{d}$, we have
$$
 \|f\|_{L^{p+1}}^{p+1}\leq C \|f\|_{L^2}^{p-1}
\sum_{k=0}^\infty 2^{-2k} \|P_{-k}  f\|_{L^2(\rd)}^2
\leq  C\|f\|_{L^2}^{p-1} \int_{\rd} |\hat{f}(\xi)|^2 |\xi|^2 d\xi.
$$
This establishes \eqref{jo} under the assumption $p\geq 1+\f{4}{d}$.
Conversely, assuming that \eqref{gns:20} holds, we test it with a function $f: \hat{f}(\xi)= \chi(2^k(\xi_1-1), 2^k \xi'), k>>1$. This yields the inequality $p\geq 1+\f{4}{d}$. Thus, we have finally established that the necessary and sufficient condition for \eqref{gns:22} to hold is exactly $1+\f{8}{d}>p\geq 1+\f{4}{d}$.

 \section{Completion of the proofs of Theorem \ref{theo:10} and Theorem \ref{theo:12}}
 \label{sec:3.4}
 We start our presentation with the proof for the existence of the waves. Along the way, we establish  a few necessary spectral properties of the corresponding linearized operators, which will be instrumental in the spectral stability considerations.
 \subsection{Existence of the waves - isotropic case}
 \label{sec:4.1}
 In this section, we present the proofs for the existence (or at least a very detailed scheme of the proof) for the isotropic case.
 We start with a few words about strategy, even though our approach, in principle, is
 a quite natural one.
 It was established in Lemma \ref{le:10} that the constrained minimization problem \eqref{e:10} is well-posed, and in fact $-\infty<m(\la)<0$. We would like to show that there is
 a minimizer for this problem, which subsequently will be shown to satisfy the Euler-Lagrange equation \eqref{47}. To this end,
consider  a minimizing sequence, say $\phi_k\in H^2(\rd)$. Ultimately, we would like to show that a strongly convergent subsequence of $\phi_k$ will converge to a solution $\Phi$.
The central  issue that we need to address is the non-triviality of such a minimizing sequence. This is the subject of the next  technical lemma.
\begin{lemma}
	\label{prop:29}
	Let $b>0, d\geq 2$ and $1<p<1+\f{8}{d}$. Let also
	\begin{itemize}
		\item $1<p\leq p^*(d)$ and $\la>0$
		\item $p^*(d)<p<1+\f{8}{d}$ and $\la>\la_{b,p,d}$.
	\end{itemize}
	Then, there exists a subsequence of $\phi_k$ so that for some $L_1>0, L_2>0, L_3>0$,
	\begin{equation}
	\label{120}
	\int_{\rd} |\De \phi_k|^2 dx \to L_1; 	\int_{\rd} |\nabla \phi_k|^2 dx \to L_2;  \ \ 	
	\int_{\rd} |\phi_k|^{p+1} dx \to L_3.
	\end{equation}
\end{lemma}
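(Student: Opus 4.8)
The plan is to extract an $H^2$-bounded minimizing sequence from the a priori bound \eqref{112}, establish once and for all the strict energy inequality $m(\la)<-\tfrac{b^2\la}{8}$ in the two regimes of the statement, and then combine it with a few Sobolev interpolation / GNS bounds to exclude $L_i=0$. Concretely, \eqref{112} gives, for any $u$ with $\|u\|_{L^2}^2=\la$, the bound $\|\De u\|_{L^2}^2\le 4\big(I[u]+C_{\la,p,b}\big)$; since $I[\phi_k]\to m(\la)$ is bounded, $\{\phi_k\}$ is bounded in $H^2(\rd)$. By the two elementary inequalities already exploited in the proof of Lemma \ref{le:10}, namely $\|\nabla\phi_k\|_{L^2}^2\le C_d\|\De\phi_k\|_{L^2}\|\phi_k\|_{L^2}$ and $\|\phi_k\|_{L^{p+1}}^{p+1}\le C_\la\|\De\phi_k\|_{L^2}^{d(p-1)/4}$, the three quantities in \eqref{120} are bounded, so after passing to a subsequence they converge to limits $L_1,L_2,L_3\ge 0$, and the whole point is their positivity.

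The engine is the identity $I[u]=\tfrac12 R[u]-\tfrac{b^2\la}{8}-\tfrac1{p+1}\|u\|_{L^{p+1}}^{p+1}$ (valid when $\|u\|_{L^2}^2=\la$), where $R[u]:=\int_{\rd}\big[|\De u|^2-b|\nabla u|^2+\tfrac{b^2}{4}u^2\big]dx=\int_{\rd}|\hat u(\xi)|^2\big(|\xi|^2-\tfrac b2\big)^2 d\xi\ge 0$, together with the claim $m(\la)<-\tfrac{b^2\la}{8}$. When $1<p\le p^*(d)$, Proposition \ref{prop:20} asserts the failure of \eqref{gns:10}, i.e. $\sup_\psi \|\psi\|_{L^{p+1}}^{p+1}/\big(\|\psi\|_{L^2}^{p-1}R[\psi]\big)=\infty$; since this ratio is invariant under $\psi\mapsto c\psi$, for every $\la>0$ one can pick $\psi$ with $\|\psi\|_{L^2}^2=\la$ and $\|\psi\|_{L^{p+1}}^{p+1}/R[\psi]>\tfrac{p+1}{2}$, whence $I[\psi]<-\tfrac{b^2\la}{8}$. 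When $p^*(d)<p<1+\tfrac8d$, fix $\psi_0$ with $\|\psi_0\|_{L^2}=1$ and set $\psi=\sqrt\la\,\psi_0$: then $I[\psi]=-\tfrac{b^2\la}{8}+\la\big(\tfrac12 R[\psi_0]-\tfrac{\la^{(p-1)/2}}{p+1}\|\psi_0\|_{L^{p+1}}^{p+1}\big)$, which drops below $-\tfrac{b^2\la}{8}$ once $\la>\la_{b,p,d}$ — exactly the source of the threshold on $\la$. In either case $m(\la)<-\tfrac{b^2\la}{8}<0$.

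Now the three exclusions, each reducing to the implication that degeneration of one of the frequency norms forces $\|\phi_k\|_{L^{p+1}}\to0$. If $L_3=0$, letting $k\to\infty$ in the identity above and using $R[\phi_k]\ge 0$ gives $m(\la)\ge -\tfrac{b^2\la}{8}$, contradicting the previous step, so $L_3>0$. If $L_1=0$, then $\|\De\phi_k\|_{L^2}\to0$, hence $\|\nabla\phi_k\|_{L^2}\to0$ and, by the Sobolev embedding $\dot{H}^{s}\hookrightarrow L^{p+1}$ with $s:=d(\tfrac12-\tfrac1{p+1})\in(0,2)$ and interpolation $\|\phi_k\|_{\dot{H}^{s}}\le\|\phi_k\|_{L^2}^{1-s/2}\|\De\phi_k\|_{L^2}^{s/2}$, also $\|\phi_k\|_{L^{p+1}}\to0$; then $I[\phi_k]\to0$, contradicting $I[\phi_k]\to m(\la)<0$, so $L_1>0$. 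Finally, if $L_2=0$, i.e. $\|\nabla\phi_k\|_{L^2}\to0$, then since $s\in(0,2)$ one again gets $\|\phi_k\|_{\dot{H}^{s}}\to0$ — interpolating between $L^2$ and $\dot{H}^{1}$ when $s\le1$, and between $\dot{H}^{1}$ and $\dot{H}^{2}$ (using $\sup_k\|\De\phi_k\|_{L^2}<\infty$) when $s>1$ — hence $\|\phi_k\|_{L^{p+1}}\to0$ and $I[\phi_k]\to\tfrac12 L_1\ge0$, once more contradicting $I[\phi_k]\to m(\la)<0$, so $L_2>0$. This establishes the lemma.

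The step I expect to carry all the weight is the strict inequality $m(\la)<-\tfrac{b^2\la}{8}$: for $p^*(d)<p<1+\tfrac8d$ it is an elementary scaling computation, but for $1<p\le p^*(d)$ (where it must hold for \emph{every} $\la>0$) it rests squarely on the failure of the mixed-dispersion GNS inequality, i.e. on Proposition \ref{prop:20}, whose proof in turn uses restriction-type estimates. Everything else — the $H^2$ bound, the extraction of the subsequence, and the three exclusions — is routine once that inequality is in hand.
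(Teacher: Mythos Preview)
Your proof is correct and follows essentially the same route as the paper's. The only organizational difference is that you first isolate and prove the strict inequality $m(\la)<-\tfrac{b^2\la}{8}$ (via the failure of \eqref{gns:10} for $p\le p^*(d)$, and via the $\psi=\sqrt{\la}\,\psi_0$ scaling for $p>p^*(d)$) and then use it to exclude $L_3=0$, whereas the paper assumes $L_3=0$, deduces $m(\la)=-\tfrac{b^2\la}{8}$, and from this derives the GNS inequality with constant $\tfrac{p+1}{2\la^{(p-1)/2}}$, which contradicts Proposition~\ref{prop:20} in the first regime and becomes absurd for $\la$ large in the second; the two arguments are contrapositives of one another. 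Your treatment of $L_1=0$ and $L_2=0$ via explicit $\dot H^s$ interpolation is a bit more detailed than the paper's one-line appeal to Gagliardo--Nirenberg, but the content is identical.
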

Informally, the claim is that for $1<p\leq p^*(d), \la>0$ and for $p^*(d)<p<1+\f{8}{d}, \la>\la_{b,p,d}$ (where $\la_{b,p,d}$ is some threshold depending on the parameters $b,p,d$), one has non-trivial minimizing sequences. Note that this does not yet show the existence of a limit, for which we bring the full weight of the compensated compactness theory to bear. At the same time this rules out some of the main obstacles toward the strong convergence of,  a subsequence of a translate of $\phi_k$,  to a minimizer.
\begin{proof}
	By the estimate \eqref{112}, it is clear that $\{	\int_{\rd} |\De \phi_k|^2 dx\}_k$ is a bounded sequence. Since $\|\phi_k\|^2=\la$ is fixed, by Sobolev embedding it follows that
	$\int_{\rd} |\nabla \phi_k|^2 dx, \int_{\rd} |\phi_k|^{p+1} dx $ are bounded as well. We can take subsequences to ensure that  the convergences in \eqref{120} hold true.
	
	Now, it remains to establish  the non-trivial claim,
	namely that all $L_1, L_2, L_3$ are non-zero. Assume for a contradiction that $L_3=0$. Introduce
	$$
	\tilde{I}[u]:=\left\{\f{1}{2}  \int_{\rd} |\De u|^2- \f{b}{2} \int_{\rd} |\nabla u|^2 \right\}.
	$$
	Clearly, $\tilde{I}[u]\geq I[u]$, whereas
	$
\lim_k \tilde{I}[\phi_k]=\lim_k I[\phi_k]=	\inf_{\|u\|^2=\la} I[u]\leq 	\inf_{\|u\|^2=\la} \tilde{I}[u].
	$
	It follows that $\phi_k$ is a minimizing sequence for the problem $	\inf_{\|u\|^2=\la} \tilde{I}[u]$ and the minima coincide. On the other hand, by Plancherel's theorem
	\begin{equation}
	\label{125}
	\int_{\rd} |\De u|^2- b\int_{\rd} |\nabla u|^2 +\f{b^2}{4} \|u\|^2=\int_{\rd} |\hat{u}(\xi)|^2
	\left(|\xi|^2 - \f{b}{2}\right)^2 d\xi\geq 0,
	\end{equation}
	whence $\inf_{\|u\|^2=\la} \tilde{I}[u]\geq -\f{b^2}{8}\la$. In fact, there is equality, i.e.,
$	\inf_{\|u\|^2=\la} \tilde{I}[u]=-\f{b^2}{8}\la$ as the inequality in \eqref{125} may be saturated by choosing a function $u$, so that $\hat{u}$  is supported arbitrarily close to $|\xi|=\f{\sqrt{b}}{2}$.

All in all, it follows that  $\inf_{\|u\|^2=\la} I[u]= -\f{b^2}{8}\la$. Applying this to arbitrary $f\neq 0$, and then
$u=\sqrt{\la} \f{f}{\|f\|}$, so that $\|u\|^2=\la$, we have
$$
\f{2\la^{\f{p-1}{2}}}{p+1} \int_{\rd} |f|^{p+1}
\leq \|f\|^{p-1} \left\{ \int_{\rd} |\De f|^2- b  |\nabla f|^2+\f{b^2}{4} |f|^2 \right\}.
$$
This last inequality is in contradiction with \eqref{gns:10} for $1<p\leq p^*(d)$, and with \eqref{g:10} for all large enough $\la$. This completes the proof for $L_3>0$ under these assumptions.

Assuming that either $L_1=0$ or $L_2=0$ implies, by the standard Gagliardo-Nirenberg inequality,  the fact that $L_3=0$, which we have just shown to be impossible.
\end{proof}
The rest of the proof for existence of a minimizer proceeds identically to the one presented in Section 3.2, \cite{atanas}. Namely, first one establishes that the function $\la\to m(\la)$ is strictly sub-additive for $1<p<1+\f{8}{d}$. That is, for all $\al\in (0, \la)$,
\begin{equation}
\label{130}
m(\la)<m(\al)+m(\la-\al).
\end{equation}
This is standard, and proceeds via the  property that $\la\to \f{m(\la)}{\la}$ is a non-increasing function, which can be obtained via elementary scaling arguments and the crucial property
$\lim_k \int_{\rd} |\phi_k|^{p+1}=L_3>0$, which was established in \eqref{120}.

Next, taking  a minimizing subsequence $\phi_k$, with the property \eqref{120}, one applies the compensated  compactness lemma to it. More specifically,  by the P.L. Lions concentration compactness lemma  (see Lemma 1.1, \cite{lions}), applied to $\rho_k:=|\phi_k|^2\in L^1(\rd)$, $\|\rho_k\|_{L^1}=\la$ there is a subsequence (denoted again by $\rho_k$), so that  at least one of the following is satisfied:
\begin{enumerate}
	\item \textit{Tightness.} There exists $ y_k \in \rone $ such that for any $ \ve >0 $ there exists $ R(\ve) $ such that for all $ k $
	$$
	\int_{B(y_k,R(\ve))}\rho_k dx\geq \int_{\rone}\rho_k dx-\ve.
	$$
	\item \textit{Vanishing.} For every $ R>0 $
	$$
	\lim\limits_{k\to \infty } \sup_{y\in \rone}\int_{B(y,R)}\rho_kdx=0.
	$$
	\item \textit{Dichotomy.} There exists $ \alpha \in (0,\lambda) $, such that for any $ \ve>0 $ there exist $ R, R_k\to \infty,y_k\in\rd $,  such that
	\begin{equation}
	\label{350}
	\left\{
	\begin{array}{c}
	\left| \int\limits_{B(y_k,R)}\rho_k dx-\alpha \right| < \ve, \ \left| \int\limits_{R<|x-y_k|<R_k}\rho_k dx\right| < \ve, \\
	\left| \int\limits_{R_k<|x-y_k|}\rho_k dx  -(\la-\al)\right| < \ve.
	\end{array}
	\right.
	\end{equation}
\end{enumerate}
 Then, one shows that the dichotomy cannot occur. The proof proceeds via an argument that shows that dichotomy leads to a   inequality of the form $m(\la)\geq m(\al)+m(\la-\al)$, with $\al$ as in the dichotomy alternative. This of course contradicts the strict sub-additivity \eqref{130}.
 Next, vanishing leads, via the standard Gagliardo-Nirenberg's,  to $\lim_k \int_{\rd} |\phi_k|^{p+1}=0$,  in a contradiction with \eqref{120}, namely $\lim_k \int_{\rd} |\phi_k|^{p+1}=L_3>0$.

 Hence, one concludes tightness. But tightness means that for some sequence $y_k\in\rd$, there is strong $L^2$ convergence for $\{\phi_k(x-y_k)\}$. Denote $\Phi(\cdot) :=\lim_k \phi_k(\cdot-y_k)$. By the lower semi-continuity of the $L^2$ norm with respect to weak convergence,  we also conclude that $\lim_k \|\Phi(\cdot)- \phi_k(\cdot-y_k)\|_{H^2}=0$, whence $\Phi$ is a constrained minimizer of \eqref{e:10}, all under the assumptions of Lemma  \ref{prop:29}.

 Now, we take on the question for the Euler-Lagrange equation \eqref{47}. To this end,  fix a test function $h$ and  consider the scalar function
 $$
 g(\eps):=I\left(\sqrt{\la}\f{\Phi+\de h}{\|\Phi+\de h\|}\right).
 $$
 Since $g$ is differentiable in a neighborhood of the origin, and achieves its minimum there, we have that $g'(0)=0$. Since this is true for all test functions $h$, the resulting expression is that $\Phi$ is a distributional  solution of \eqref{47}. It is standard result in elliptic theory to conclude that such a solution $\Phi\in H^4(\rd)$. In addition, one can establish asymptotics for such functions, but we will not do so herein.

 Next, we consider the  second derivative necessary condition for a minimum at zero, which states that  $g''(0)\geq 0$. Assuming that the test function $h\perp \Phi$, we conclude $\dpr{\cl_+ h}{h}\geq 0$, which is exactly $\cl_+|_{\{\Phi\}^\perp}\geq 0$. In fact, this is sharp, because by a direct inspection \footnote{Note however that $\nabla \Phi\perp \Phi$} $\cl_+[\nabla \Phi]=0$. Also, since
 $
 \dpr{\cl_+\Phi}{\Phi}=-(p-1) \int |\Phi|^{p+1} dx<0,
 $
 we conclude that $\cl_+$ indeed has a negative eigenvalue. This coupled with $\cl_+|_{\{\Phi\}^\perp}\geq 0$ confirms that $\cl_+$ has exactly one negative eigenvalue.

 Finally, we show that $\cl_-\geq 0$. Assume not. Then, there is $\psi\perp \Phi, \|\psi\|=1,  \cl_-\psi=-\si^2 \psi$. Note however that $\cl_->\cl_+$, whence
 $$
 0\leq \dpr{\cl_+ \psi}{\psi}< \dpr{\cl_- \psi}{\psi}=-\si^2,
 $$
 which is a contradiction. Looking closely, this also shows that $0$ is a simple eigenvalue for $\cl_-$, because then, we take $\psi\perp \Phi: \cl_\psi=0$, and this still leads to a contradiction as above.
 Thus, we have shown the following proposition.
 \begin{proposition}
 	\label{prop:38}
 	Let $b>0, d\geq 2$, $1<p<1+\f{8}{d}$ and one of the two assumptions below are verified
 	\begin{itemize}
 		\item $1<p\leq p^*(d)$ and $\la>0$
 		\item $p^*(d)<p<1+\f{8}{d}$ and $\la>\la_{b,p,d}$.
 	\end{itemize}
 	Then, there exists $\Phi=\Phi_\la$, a constrained minimizer of \eqref{e:10} and  $\om=\om_\la>0$. In addition, $\Phi$ satisfies the Euler-Lagrange equation \eqref{47}, and the linearized Schr\"odinger operators $\cl_{\pm}$ satisfy
 	\begin{enumerate}
 		\item $\cl_-[\Phi]=0$, $0\in\si_{p.p.}(\cl_-)$ is a simple eigenvalue, and
 		$\cl_-|_{\{\Phi\}^\perp}\geq \de>0$, for some $\de>0$
 		\item $\cl_+|_{\{\Phi\}^\perp}\geq 0$. Moreover, $n(\cl_+)=1$
 	\end{enumerate}
 \end{proposition}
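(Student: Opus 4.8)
The plan is to collect the ingredients developed above into a proof with three parts: existence of a constrained minimizer $\Phi_\la$, derivation of the Euler--Lagrange equation, and the spectral analysis of $\cl_\pm$. For \textbf{existence}, I start from an arbitrary minimizing sequence $\phi_k$ for \eqref{e:10}. By \eqref{112} it is bounded in $H^2(\rd)$, and Lemma \ref{prop:29} (whose hypotheses on $p$ and $\la$ are exactly those in the statement) produces a subsequence along which the three quantities in \eqref{120} converge to strictly positive limits $L_1,L_2,L_3$. The key consequence is that mass persists in the nonlinear term, $\lim_k\int_{\rd}|\phi_k|^{p+1}=L_3>0$. This, together with the elementary scaling fact that $\la\mapsto m(\la)/\la$ is nonincreasing, yields the strict subadditivity $m(\la)<m(\al)+m(\la-\al)$ for every $\al\in(0,\la)$. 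Feeding $\rho_k:=|\phi_k|^2$ into the P.~L.~Lions concentration--compactness dichotomy, vanishing is excluded because it forces $\int_{\rd}|\phi_k|^{p+1}\to 0$ by Gagliardo--Nirenberg, contradicting $L_3>0$; dichotomy is excluded because a localization estimate shows it would give $m(\la)\ge m(\al)+m(\la-\al)$, contradicting strict subadditivity. Hence tightness holds, so for suitable $y_k$ the translates $\phi_k(\cdot-y_k)$ converge strongly in $L^2$, and weak lower semicontinuity of the $H^2$ seminorm upgrades this to strong $H^2$ convergence to a minimizer $\Phi=\Phi_\la$.

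For the \textbf{Euler--Lagrange equation}, fix a test function $h$ and set $g(\de):=I\!\left(\sqrt{\la}\,(\Phi+\de h)/\|\Phi+\de h\|\right)$. Since $g$ is differentiable near $0$ and attains its minimum there, $g'(0)=0$; unwinding this identity over all $h$ shows $\Phi$ solves \eqref{47} in the distributional sense for a Lagrange multiplier $\om=\om(\la)$, and a standard elliptic bootstrap gives $\Phi\in H^4(\rd)$. Positivity of $\om$ follows from the range-of-frequencies discussion ($\om\ge b^2/4>0$) already recorded for these waves.

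For the \textbf{spectrum of $\cl_\pm$}, differentiate $g$ once more: $g''(0)\ge 0$ translates into $\dpr{\cl_+ h}{h}\ge 0$ for all $h\perp\Phi$, i.e. $\cl_+|_{\{\Phi\}^\perp}\ge 0$, which is sharp since $\cl_+[\nabla\Phi]=0$ and $\nabla\Phi\perp\Phi$. On the other hand $\dpr{\cl_+\Phi}{\Phi}=-(p-1)\int_{\rd}|\Phi|^{p+1}\,dx<0$, so $\cl_+$ has a negative direction; combined with nonnegativity on $\{\Phi\}^\perp$ this forces $n(\cl_+)=1$. For $\cl_-$, a direct substitution gives $\cl_-[\Phi]=0$, so $0\in\si_{p.p.}(\cl_-)$. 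If $\cl_-$ had a negative eigenvalue, pick a normalized eigenfunction $\psi$; since $\cl_-\Phi=0$ one may take $\psi\perp\Phi$, and then $\cl_->\cl_+$ gives $0\le\dpr{\cl_+\psi}{\psi}<\dpr{\cl_-\psi}{\psi}<0$, a contradiction. The same comparison with $\psi\perp\Phi$ and $\cl_-\psi=0$ shows $0$ is simple. Hence $\cl_-\ge 0$ and in fact $\cl_-|_{\{\Phi\}^\perp}\ge\de>0$ for some $\de>0$.

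I expect the \textbf{main obstacle} to be the concentration--compactness step, and specifically the exclusion of dichotomy: one must localize $\phi_k$ with smooth cutoffs and control the fourth-order cross terms generated by the $\De^2$-part of $I$ (these are the new difficulty relative to the second-order NLS), then verify that the two localized pieces are near-minimizers for the split masses $\al$ and $\la-\al$. Everything downstream of that — sub-additivity via $L_3>0$, the variational derivation of \eqref{47}, and the $\cl_->\cl_+$ comparison for the spectrum — is then routine.
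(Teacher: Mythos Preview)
Your proposal is correct and follows essentially the same route as the paper: Lemma~\ref{prop:29} for $L_3>0$, strict sub-additivity of $m(\la)$ via the monotonicity of $\la\mapsto m(\la)/\la$, Lions' concentration--compactness with vanishing excluded by $L_3>0$ and dichotomy by sub-additivity, then the variational computation of $g'(0)=0$ and $g''(0)\ge 0$ on $\{\Phi\}^\perp$ together with the $\cl_->\cl_+$ comparison for the spectral claims. Your identification of the dichotomy step (with its fourth-order cross terms) as the main technical point is also exactly where the paper defers to the earlier reference~\cite{atanas}.
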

This completes the existence part of Theorem \ref{theo:10}.
\subsection{Existence of the waves - anisotropic case}
Following identical steps as in Section \ref{sec:4.1}, we establish the following analog of Lemma \ref{prop:29}.
 \begin{lemma}
 	\label{prop:39}
 	Let $b>0, d\geq 2$ and $1<p<1+\f{8}{d}$. Let also
 	\begin{itemize}
 		\item $1<p\leq 1+\f{4}{d}$ and $\la>0$
 		\item $1+\f{4}{d}<p<1+\f{8}{d}$ and $\la>\la_{b,p,d}$.
 	\end{itemize}
 	Then, there exists a subsequence of $\phi_k$ so that for some $L_1>0, L_2>0, L_3>0$,
 	\begin{equation}
 	\label{120}
 	\int_{\rd} |\De \phi_k|^2 dx \to L_1; 	\int_{\rd} |\nabla \phi_k|^2 dx \to L_2;  \ \ 	
 	\int_{\rd} |\phi_k|^{p+1} dx \to L_3.
 	\end{equation}
 \end{lemma}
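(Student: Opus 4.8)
The plan is to transcribe, almost verbatim, the proof of Lemma~\ref{prop:29}, with the anisotropic Gagliardo--Nirenberg--Sobolev dichotomy of Proposition~\ref{prop:25} used in place of Proposition~\ref{prop:20}; essentially no other new input is needed. First I would note that the well-posedness argument of Lemma~\ref{le:10} applies unchanged with $\int_{\rd}|\nabla u|^2$ replaced by the smaller quantity $\int_{\rd}|u_{x_1}|^2$, yielding $n(\la)>-\infty$ together with the coercive bound $J[u]\ge\f{1}{4}\|\De u\|_{L^2}^2-C_{\la,p,b}$ whenever $\|u\|_{L^2}^2=\la$ and $1<p<1+\f{8}{d}$. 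Hence along any minimizing sequence $\phi_k$ for $n(\la)$ the quantities $\{\|\De\phi_k\|_{L^2}^2\}_k$ are bounded, and then $\{\|\nabla\phi_k\|_{L^2}^2\}_k$ and $\{\|\phi_k\|_{L^{p+1}}^{p+1}\}_k$ are bounded by the Sobolev embedding; passing to a subsequence produces the limits $L_1,L_2,L_3$ in \eqref{120}. It remains to prove that $L_1,L_2,L_3$ are all strictly positive.

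The heart of the matter is $L_3>0$. Suppose, for contradiction, that $L_3=0$, and set $\tilde J[u]:=\f{1}{2}\int_{\rd}|\De u|^2-\f{b}{2}\int_{\rd}|u_{x_1}|^2$. Exactly as in Lemma~\ref{prop:29}, $\tilde J\ge J$ while $\lim_k\tilde J[\phi_k]=\lim_k J[\phi_k]=n(\la)\le\inf_{\|u\|_{L^2}^2=\la}\tilde J[u]$, so $\phi_k$ also minimizes $\tilde J$ on the sphere and $n(\la)=\inf_{\|u\|_{L^2}^2=\la}\tilde J[u]$. By Plancherel's theorem,
\begin{equation*}
\int_{\rd}|\De u|^2-b\int_{\rd}|u_{x_1}|^2+\f{b^2}{4}\|u\|_{L^2}^2
=\int_{\rd}|\hat u(\xi)|^2\Big[\big(\xi_1^2-\tfrac{b}{2}\big)^2+2\xi_1^2|\xi'|^2+|\xi'|^4\Big]\,d\xi\ \ge\ 0,
\end{equation*}
and the symbol in brackets is driven to zero by concentrating $\hat u$ near the two points $\xi_1=\pm\sqrt{b/2}$, $\xi'=0$; hence $\inf_{\|u\|_{L^2}^2=\la}\tilde J[u]=-\f{b^2}{8}\la$ and therefore $n(\la)=-\f{b^2}{8}\la$. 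Inserting $u=\sqrt{\la}\,f/\|f\|_{L^2}$ into $n(\la)\le J[u]$ for an arbitrary $f\ne0$ yields, precisely as in Lemma~\ref{prop:29},
\begin{equation*}
\f{2\,\la^{\f{p-1}{2}}}{p+1}\int_{\rd}|f|^{p+1}\,dx\ \le\ \|f\|_{L^2}^{p-1}\left(\int_{\rd}\big[|\De f|^2-b|f_{x_1}|^2+\tfrac{b^2}{4}|f|^2\big]\,dx\right).
\end{equation*}
For $1<p\le1+\f{4}{d}$ this contradicts \eqref{gns:12}, which says that no such inequality can hold for any constant. For $1+\f{4}{d}<p<1+\f{8}{d}$, inequality \eqref{gns:22} holds with a finite, strictly positive sharp constant $C_{p,d,b}^{*}$, so the displayed inequality forces $C_{p,d,b}^{*}\le\f{p+1}{2}\la^{-\f{p-1}{2}}$, which fails once $\la>\la_{b,p,d}:=\big(\tfrac{p+1}{2C_{p,d,b}^{*}}\big)^{\f{2}{p-1}}$; this is exactly the threshold appearing in the statement. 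In either admissible regime we obtain a contradiction, so $L_3>0$.

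Finally, if $L_1=0$ (so $\|\De\phi_k\|_{L^2}\to0$) or $L_2=0$ (so $\|\nabla\phi_k\|_{L^2}\to0$), then the standard Gagliardo--Nirenberg inequality — combined, in the second case, with the $\dot H^2$-boundedness of $\{\phi_k\}$ — forces $\int_{\rd}|\phi_k|^{p+1}\to0$, i.e.\ $L_3=0$, which has just been excluded. Hence $L_1,L_2,L_3>0$, which is the assertion of the lemma. I expect no substantive obstacle here: the argument is a direct adaptation of Lemma~\ref{prop:29}, the only structural change being that the rescaled dispersion symbol $h(\xi)=(\xi_1^2-1)^2+2\xi_1^2|\xi'|^2+|\xi'|^4$ is anisotropic, so its zero set is a pair of points rather than a sphere — a cosmetic difference that alters neither the value $-\f{b^2}{8}\la$ nor the reduction to Proposition~\ref{prop:25}.
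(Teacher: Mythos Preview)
Your proposal is correct and follows precisely the paper's own approach: the paper simply states that the proof of Lemma~\ref{prop:39} ``proceeds in an identical manner to the proof of Lemma~\ref{prop:29} \ldots\ with the suitable replacement of isotropic Proposition~\ref{prop:20} with its anisotropic analog Proposition~\ref{prop:25},'' and you have carried out exactly that transcription, correctly computing the anisotropic symbol $(\xi_1^2-\tfrac{b}{2})^2+2\xi_1^2|\xi'|^2+|\xi'|^4$ and noting that its zero set degenerates to two points without affecting the value $-\tfrac{b^2}{8}\la$ of the constrained infimum.
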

 The proof of Lemma \ref{prop:39} proceeds in an identical manner  to the proof of Lemma \ref{prop:29} in Section \ref{sec:4.1}, with the suitable replacement of isotropic Proposition \ref{prop:20} with its anisotropic analog Proposition \ref{prop:25}.

 Once this step is completed, one establishes the strong sub-linearity of the cost function $n(\la)$, similar to the sub-linearity of $m(\la)$. The next step, again identical to the corresponding step for the isotropic case, is to show that once we take a minimizing sequence $\phi_k$, the method of compensated compactness goes through for the functions $\rho_k:=\phi_k^2$. This establishes the existence of the minimizer $\Phi$. Similarly, it satisfies the Euler-Lagrange equation and the spectral properties hold true. We collect the results in the next Proposition.
  \begin{proposition}
  	\label{prop:48}
  	Let $b>0, d\geq 2$, $1<p<1+\f{8}{d}$ and one of the two assumptions below are verified
  	\begin{itemize}
  		\item $1<p\leq 1+\f{4}{d}$ and $\la>0$
  		\item $1+\f{4}{d} < p <1+\f{8}{d}$ and $\la>\la_{b,p,d}$.
  	\end{itemize}
  	Then, there exists $\Phi=\Phi_\la, \om=\om_\la>0$, a constrained minimizer of \eqref{e:11}.
  	 In addition, $\Phi$ satisfies the Euler-Lagrange equation \eqref{22}
  	and the linearized   operators $\cl_{\pm}$ obey
  	\begin{enumerate}
  		\item $\cl_-\geq 0$. More specifically, $\cl_-[\Phi]=0$, $0\in\si_{p.p.}(\cl_-)$ is a simple eigenvalue, and
  		$\cl_-|_{\{\Phi\}^\perp}\geq \de>0$, for some $\de>0$
  		\item $\cl_+|_{\{\Phi\}^\perp}\geq 0$. Moreover, $n(\cl_+)=1$.
  	\end{enumerate}
  \end{proposition}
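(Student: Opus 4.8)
Proof proposal for Proposition \ref{prop:48}.

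The plan is to follow Section \ref{sec:4.1} essentially line by line, with Proposition \ref{prop:25} taking over the role played there by Proposition \ref{prop:20}. First I would fix $\la$ in the admissible range and take a minimizing sequence $\{\phi_k\}\subset H^2(\rd)$ for $n(\la)$ with $\|\phi_k\|_{L^2}^2=\la$. Since $\int|\p_{x_1}u|^2\le\int|\nabla u|^2\le C_d\|\De u\|\,\|u\|$, the anisotropic analog of \eqref{112} still reads $J[u]\ge\tfrac14\|\De u\|_{L^2}^2-C_{\la,p,b}$, so $\{\|\De\phi_k\|_{L^2}\}$ is bounded; by Sobolev embedding $\int|\p_{x_1}\phi_k|^2$ and $\int|\phi_k|^{p+1}$ are bounded as well, and after passing to a subsequence Lemma \ref{prop:39} provides the non-degeneracy $\int|\De\phi_k|^2\,dx\to L_1>0$, $\int|\nabla\phi_k|^2\,dx\to L_2>0$, $\int|\phi_k|^{p+1}\,dx\to L_3>0$.

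Next I would run the concentration-compactness machinery verbatim. An elementary amplitude-scaling argument together with $\lim_k\int|\phi_k|^{p+1}=L_3>0$ shows that $\la\mapsto n(\la)/\la$ is non-increasing, hence the strict sub-additivity $n(\la)<n(\al)+n(\la-\al)$ for all $\al\in(0,\la)$. Applying the P.L. Lions concentration compactness lemma to $\rho_k:=|\phi_k|^2$, $\|\rho_k\|_{L^1}=\la$: vanishing is excluded because, via the standard Gagliardo-Nirenberg inequality, it would force $\int|\phi_k|^{p+1}\to0$, contradicting $L_3>0$; dichotomy is excluded because it yields $n(\la)\ge n(\al)+n(\la-\al)$, contradicting strict sub-additivity. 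Hence tightness holds, so for suitable $y_k\in\rd$ one has $\phi_k(\cdot-y_k)\to\Phi$ strongly in $L^2$, and lower semicontinuity of the remaining norms with respect to weak convergence upgrades this to $H^2$-convergence; thus $\Phi$ is a constrained minimizer of \eqref{e:11}.

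Then I would extract the Euler-Lagrange equation by the usual Lagrange-multiplier device: for a test function $h$, $g(\de):=J\left(\sqrt\la\,\frac{\Phi+\de h}{\|\Phi+\de h\|}\right)$ is differentiable near $0$ and $g'(0)=0$ for all $h$, which makes $\Phi$ a distributional (hence, by elliptic regularity, $H^4(\rd)$) solution of \eqref{22} with some multiplier $\om=\om_\la$. For the spectral conclusions: $g''(0)\ge0$ restricted to $h\perp\Phi$ gives $\cl_+|_{\{\Phi\}^\perp}\ge0$, which is sharp since $\cl_+[\nabla\Phi]=0$ with $\nabla\Phi\perp\Phi$, while $\dpr{\cl_+\Phi}{\Phi}=-(p-1)\int|\Phi|^{p+1}<0$ exhibits a negative direction, so $n(\cl_+)=1$; and $\cl_-\ge0$ because if $\cl_-\psi=-\si^2\psi$ with $\psi\perp\Phi$, $\|\psi\|=1$, $\si\ge0$, then $\cl_->\cl_+$ (the difference $(p-1)|\Phi|^{p-1}$ is nonnegative and not identically zero) forces $0\le\dpr{\cl_+\psi}{\psi}<\dpr{\cl_-\psi}{\psi}=-\si^2$, ruling out both $\si>0$ and $\si=0$; this gives $\cl_-\ge0$, simplicity of the eigenvalue $0$ with eigenfunction $\Phi$, and $\cl_-|_{\{\Phi\}^\perp}\ge\de>0$. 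Finally $\om>0$ follows since $\si_{\mathrm{ess}}(\cl_-)=[\om-\tfrac{b^2}{4},\infty)\subset[0,\infty)$ forces $\om\ge\tfrac{b^2}{4}>0$.

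The only genuinely new ingredient, and hence the main obstacle, is the non-triviality step, i.e.\ establishing Lemma \ref{prop:39}, which relies on the anisotropic GNS dichotomy of Proposition \ref{prop:25} in place of Proposition \ref{prop:20}. Concretely, assuming $L_3=0$, the sequence $\{\phi_k\}$ also minimizes $\widetilde J[u]:=\tfrac12\int|\De u|^2-\tfrac{b}{2}\int|\p_{x_1}u|^2$ on $\|u\|^2=\la$; by Plancherel, $\int|\De u|^2-b\int|\p_{x_1}u|^2+\tfrac{b^2}{4}\|u\|^2=\int|\hat u(\xi)|^2 h_b(\xi)\,d\xi$ with $h_b(\xi)=(\xi_1^2-\tfrac{b}{2})^2+2\xi_1^2|\xi'|^2+|\xi'|^4\ge0$, vanishing only near $\xi=(\pm\sqrt{b/2},0,\dots,0)$, so $\inf_{\|u\|^2=\la}\widetilde J[u]=-\tfrac{b^2}{8}\la$ and therefore $\inf_{\|u\|^2=\la}J[u]=-\tfrac{b^2}{8}\la$. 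Applying this to $u=\sqrt\la\,f/\|f\|$ for arbitrary $f\ne0$ yields $\tfrac{2}{p+1}\la^{(p-1)/2}\int|f|^{p+1}\le\|f\|^{p-1}\int|\hat f|^2 h_b\,d\xi$ for all $f$, a GNS inequality of exactly the type in Proposition \ref{prop:25}. For $1<p\le1+\tfrac4d$ this contradicts the non-existence statement \eqref{gns:12} for every $\la>0$; for $1+\tfrac4d<p<1+\tfrac8d$ it contradicts the optimality of the best constant in \eqref{gns:22} as soon as $\la$ exceeds a threshold $\la_{b,p,d}$. Everything else is a verbatim transcription of the isotropic argument, so the remaining write-up is short.
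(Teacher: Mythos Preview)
Your proposal is correct and follows essentially the same approach as the paper: the paper's own proof of Proposition~\ref{prop:48} consists of a few sentences stating that every step of Section~\ref{sec:4.1} carries over verbatim, with Proposition~\ref{prop:25} replacing Proposition~\ref{prop:20} in the non-triviality argument (Lemma~\ref{prop:39}), and you have faithfully executed that transcription with the appropriate anisotropic substitutions.
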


 \subsection{Spectral stability of the normalized waves}
 In this section,   we show the spectral stability of the waves constructed as
 constrained minimizers as \eqref{e:10}, \eqref{e:11} respectively.
 Starting with the eigenvalue problem \eqref{219}, we have that instability is equivalent to the solvability of the system
 \begin{equation}
 	\label{410}
 	 \left\{
 	\begin{array}{l}
 		\cl_- g=-\la f \\
 		\cl_+ f=\la g
 	\end{array}
 	\right.
 \end{equation}
 for some $\la: \Re\la>0$.
 So, applying $\cl_-$ to the second equation, we see that \eqref{410} reduces to the solvability of
  \begin{equation}
 	\label{411}
 	\cl_-\cl_+ f = -\la^2 f.
 \end{equation}
Conversely, if \eqref{411} has a non-trivial solution $\la, f$, then $g:=\la^{-1} \cl_+ f$ has a nontrivial solution $\la, f,g$.
So, \eqref{410} and \eqref{411}  are equivalent and we concentrate on the eigenvalue problem $\cl_-\cl_+ f = -\la^2 f$ henceforth.

 It follows immediately that $f\perp \Phi$. Thus, as $\cl_-|_{\{\Phi\}^\perp}\geq \de>0$, it follows that there exists unique $\eta\in \{\Phi\}^\perp$, so that $f=\sqrt{\cl_-} \eta$. Writing the relation $\cl_-\cl_+ f = -\la^2 f$ in terms of $\eta$ yields
 $$
 \sqrt{\cl_-}(\sqrt{\cl_-}\cl_+ \sqrt{\cl_-}\eta+\la^2 \eta)=0.
 $$
 As $\sqrt{\cl_-}\cl_+ \sqrt{\cl_-}\eta+\la^2 \eta\in \{\Phi\}^\perp=Ker(\cl_-)^\perp$, we conclude that
 $\sqrt{\cl_-}\cl_+ \sqrt{\cl_-}\eta+\la^2 \eta=0$. Thus,
  \begin{equation}
 	\label{412}
 \sqrt{\cl_-}\cl_+ \sqrt{\cl_-}\eta=-\la^2 \eta
\end{equation}
 Note however that the operator $\sqrt{\cl_-}\cl_+ \sqrt{\cl_-}$ is symmetric now, whence $-\la^2 \in \si_{p.p.}(\sqrt{\cl_-}\cl_+ \sqrt{\cl_-})$, so $-\la^2\in \rone$. We have already shown that there could not be oscillatory  instabilities. Furthermore, testing \eqref{412} with $\eta\in  \{\Phi\}^\perp$, we obtain
 $$
 -\la^2 \|\eta\|^2=\dpr{\cl_+ \sqrt{\cl_-} \eta}{\sqrt{\cl_-} \eta}=\dpr{\cl_+ f}{f}.
 $$
 Since $f\in  \{\Phi\}^\perp$ and $\cl_+|_{\{\Phi\}^\perp}\geq 0$, it follows that $\dpr{\cl_+ f}{f}\geq 0$, whence $-\la^2\geq 0$. This implies that all spectrum is stable, hence the spectral stability of $\Phi$ follows.

\section{Numerical Computations}

In the present section, we show a number of numerical computations for $d=2$ which corroborate and complement
our analytical results on the existence and stability of solitons for both the isotropic and anisotropic models with competing Laplacian and biharmonic operators.

\begin{figure}[htb]
    \centering
    \begin{tabular}{cc}
     \includegraphics[width=80mm,clip = true]{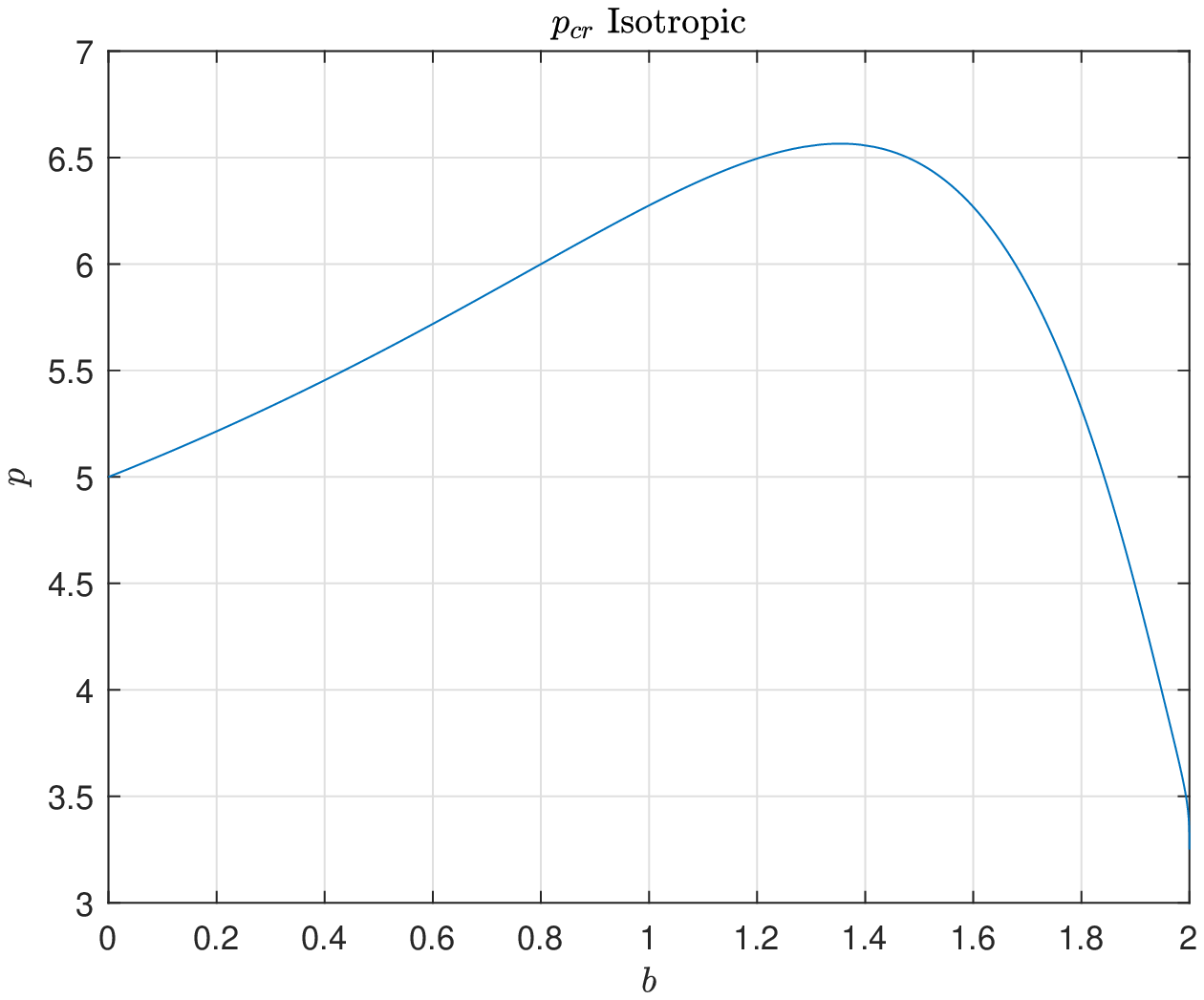} &
    \includegraphics[width=80mm,clip = true]{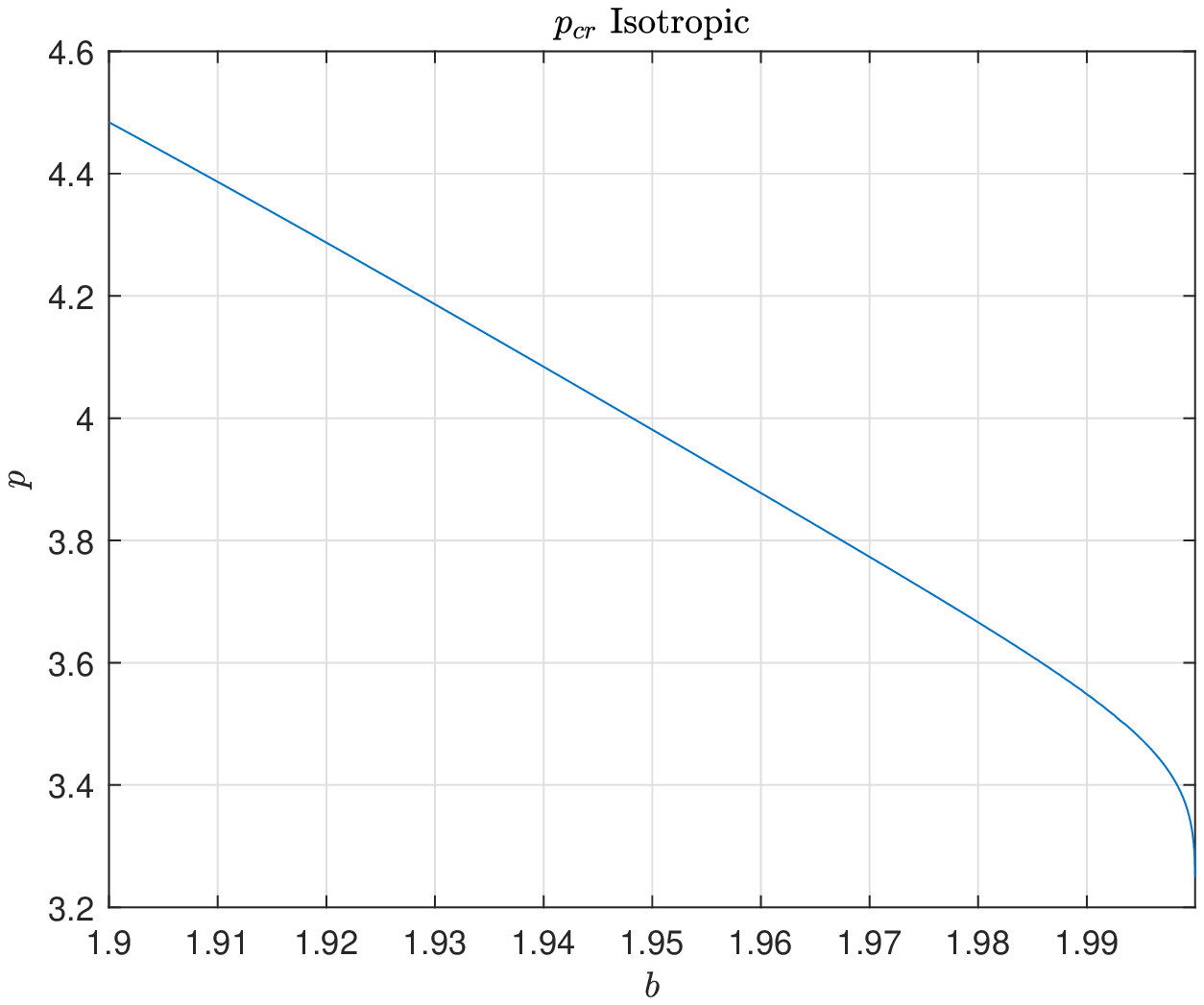}
    \end{tabular}
    \caption{Two-parameter plane of { the nonlinearity
    exponent parameter $p$ vs. the Laplacian prefactor $b$
    (varying between $0$ and $2$); recall that the frequency
    $\omega$ is fixed to unity, while our computations are for
    dimension $d=2$. The figure shows} the bifurcation loci separating spectrally stable solitons (under the curve) from unstable ones (above the curve). The right panel shows a blowup of the left one close to the edge point of $p=3$ and $b=2$.}
    \label{newfig1}
\end{figure}

We start with the isotropic case. A summary of our results
can be firstly found in Fig.~\ref{newfig1} which contains a two-parameter ($p$ vs $b$)
diagram. Here, the depicted curve separates the regime of spectrally
stable waves (under the curve) from spectrally and dynamically
unstable ones (over the curve) for fixed frequency $\omega=1$.
It is important to recall here that any pair $(b,\tilde{\omega}=1)$ for a given $b$ and
fixed $\tilde{\omega}$ can be converted upon rescaling to a pair $(\tilde{b}=1,\omega=1/b^2)$,
i.e., results pertaining to $b$ variation for fixed $\tilde{\omega}$ are
tantamount to ones with fixed $\tilde{b}$ and variable $\omega$. By using the latter representation, it is possible to connect to the well-known Vakhitov-Kolokolov criterion for the spectral stability, based on the monotonicity of the $P(\omega)$ dependence~\cite{promislow}. Increasing dependence of $P(\omega)$ (or, analogously, decreasing dependence of $P(b)$) is necessary for spectral stability, while a decreasing dependence (or increasing dependence of $P(b)$) leads to spectral (and dynamical) instability
for the single-humped states
considered herein. Furthermore, it should be noted
that the limit of $b=0$ is tantamount to $\omega \rightarrow \infty$, while $b \rightarrow 2$ corresponds to $\omega \rightarrow 1/4$ within the
above scaling (the linear limit), setting  the scales of variation of the respective parameters.

Representations of the dependence of $P$ with respect to $\omega$ for different values of $p$ can be found in Fig. \ref{newfig2}.
It can be clearly seen that in the case of $p=3$, similarly to what happens for all values with $1<p<3$, $P$ increases monotonically with $\omega$, pertaining to a regime of spectral stability.
Our numerical results seem to suggest the
presence of a $p^* \approx 3.3$
(see once again the right end of the
curves in the panels of Fig.~\ref{newfig1}).
For $1<p<p^*$, in line with Theorem 1,
we find ground state minimizers for {\it all}
values of $P \equiv \lambda$.
By $P$ here, we denote the squared $L^2$
norm due to its being tantamount to the
optical power in the corresponding
physical problem.
For values of the exponent $p$ that lie within $p^*<p<5$, the power $P$ features an interval of monotonic decrease with $\omega$ close to the linear limit (of dispersing waveforms). Indeed, the corresponding solutions near the linear limit are unstable, while for sufficiently large frequencies the solutions become spectrally stable, as seen in the top right panel which corresponds to $p=5$.
This finding also corroborates the results
of Theorem 1, since in the latter interval,
it is not possible to reach powers $P$
($\lambda$) below the minimum of the
corresponding curve.
For $p>5$ and below a critical, dimension-dependent threshold
(which for our two-dimensional case is $p_{cr} = 6.565$),
instabilities
arise {\it both} for sufficiently small (near the linear limit) and sufficiently large (instabilities due to collapse) values of $\omega$, as it is shown in the bottom left panel for $p=6$; in this case, the only stable frequencies are the intermediate ones, corresponding to the interval of growing $P$. Finally, when going above the relevant critical value of $p$ (see bottom right panel, corresponding to $p=7$), the soliton is spectrally and dynamically unstable for {\em all} the frequencies, given the monotonically decreasing dependence of $P$ with respect to $\omega$. Notice that these findings
for $p>5 \equiv 1+\frac{8}{d}$ complement
in a natural way
the rigorous results of Theorem 1.

\begin{figure}[htb]
    \centering
    \begin{tabular}{cc}
    \includegraphics[width=84mm,clip = true]{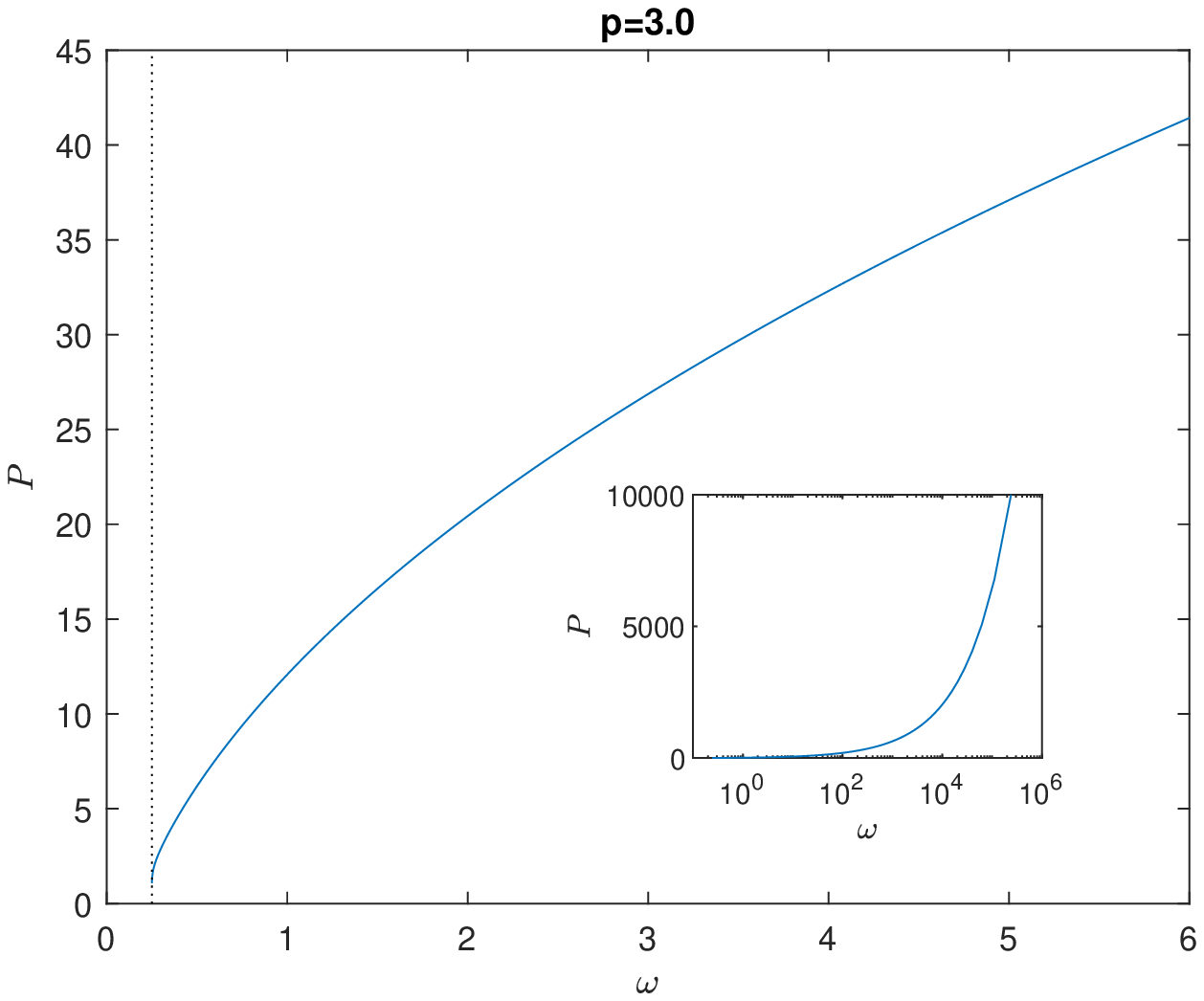} &
    \includegraphics[width=84mm,clip = true]{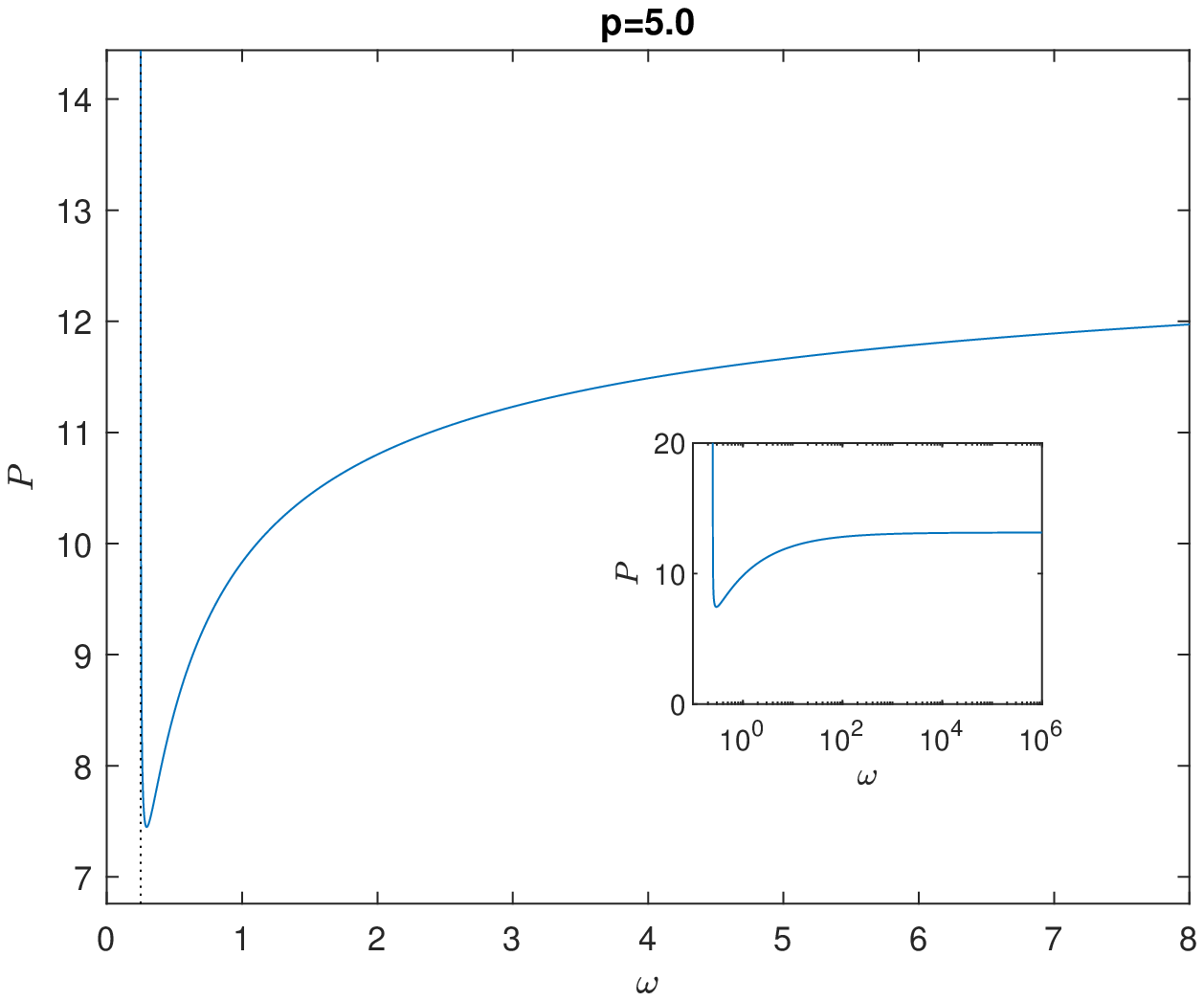} \\
    \includegraphics[width=84mm,clip = true]{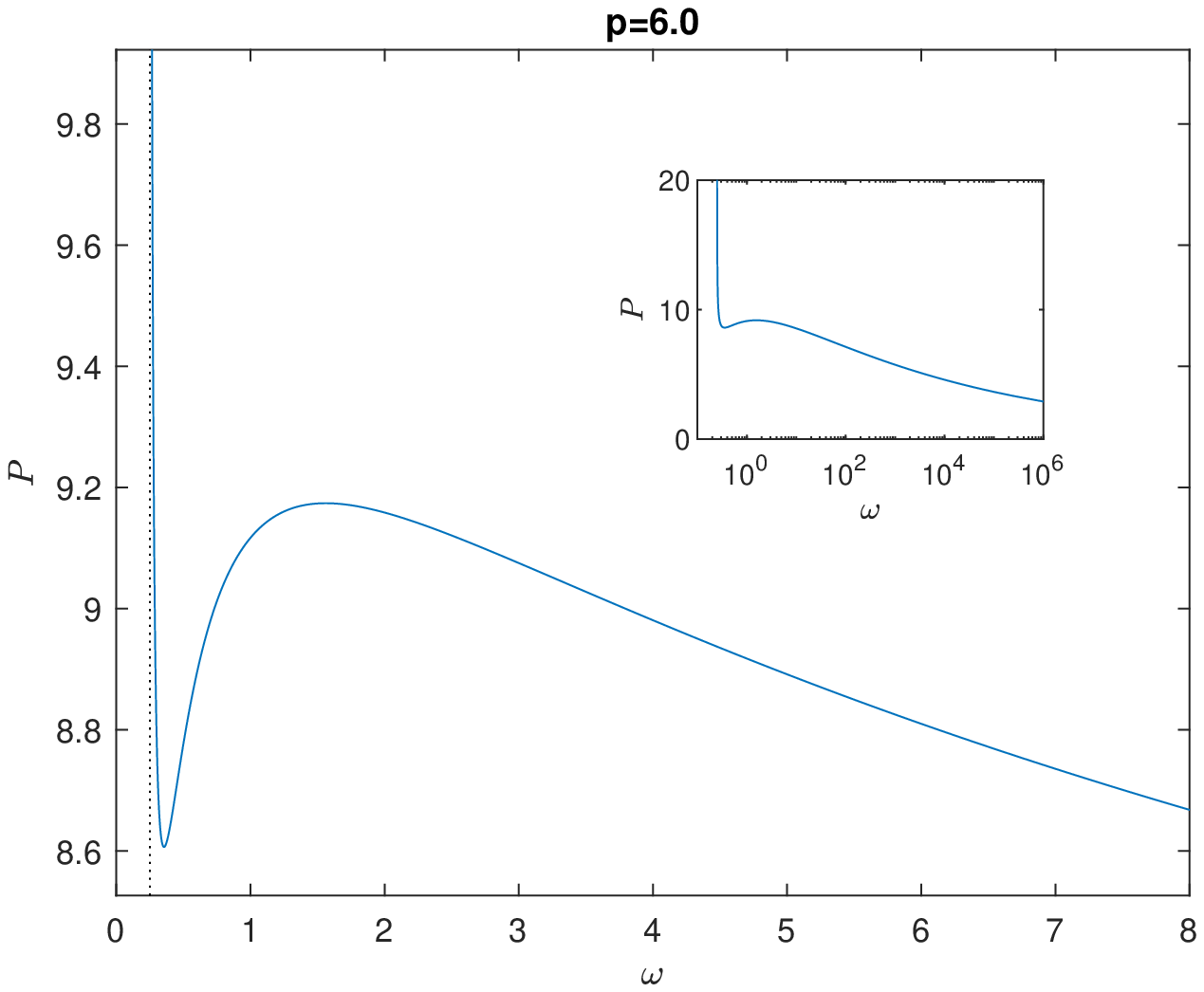} &
    \includegraphics[width=84mm,clip = true]{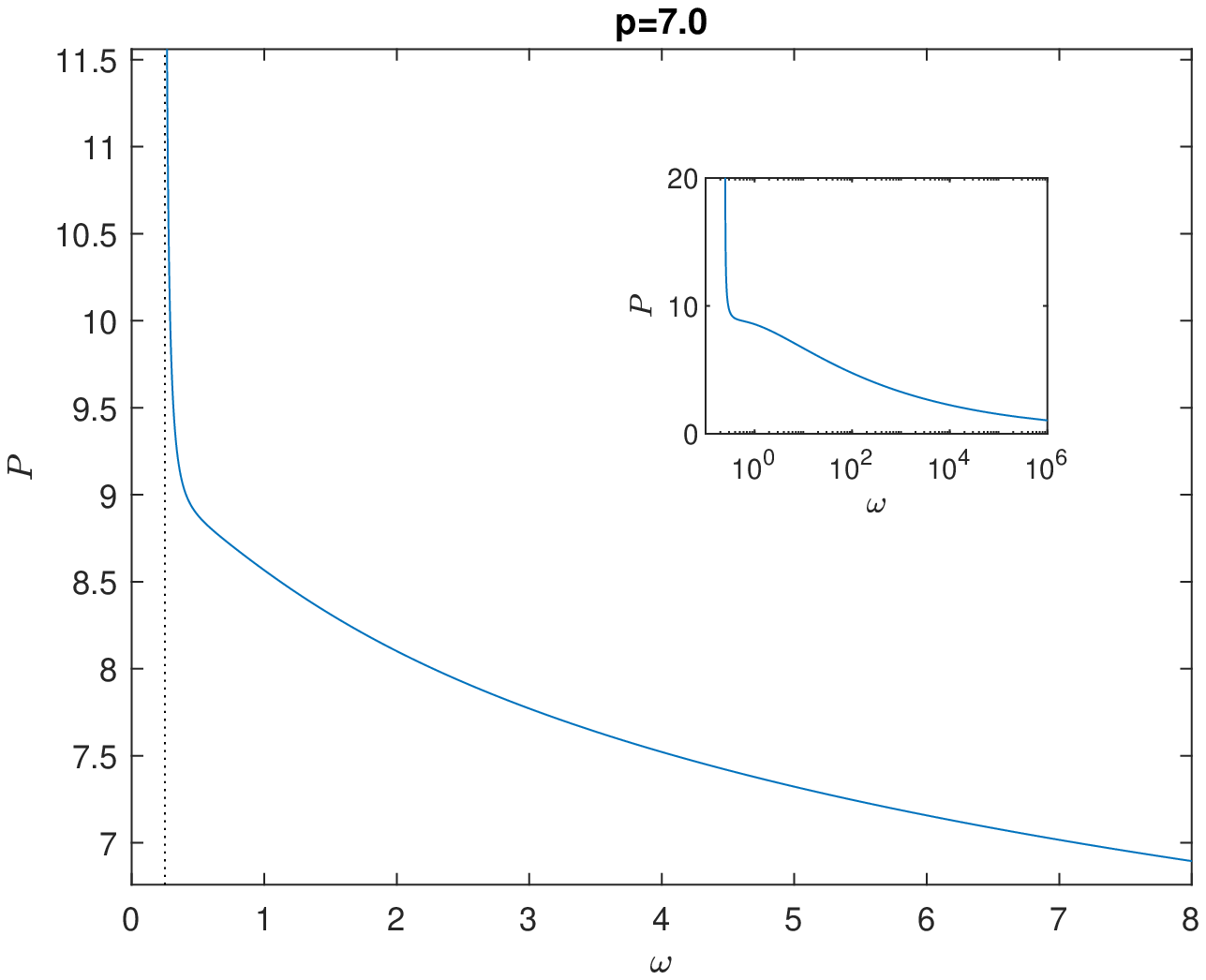} \\
    \end{tabular}
    \caption{Dependence of the squared $L^2$ norm, denoted by $P$,
    { i.e., $P=\int_{{\mathbf{R}^2}} |u|^{2}$ for our computations}, with respect to the frequency $\omega$ for different values of the nonlinearity exponent $p$, in the {\em isotropic} case for dimension $d=2$. These plots showcase the different stability regimes that can be found herein (see text for more details).
    The insets show the same graph over an expanded interval of
    frequencies, using a semi-logarithmic scale for the latter.}
    \label{newfig2}
\end{figure}

\begin{figure}[htb]
    \centering
    \includegraphics[width=150mm,clip = true]{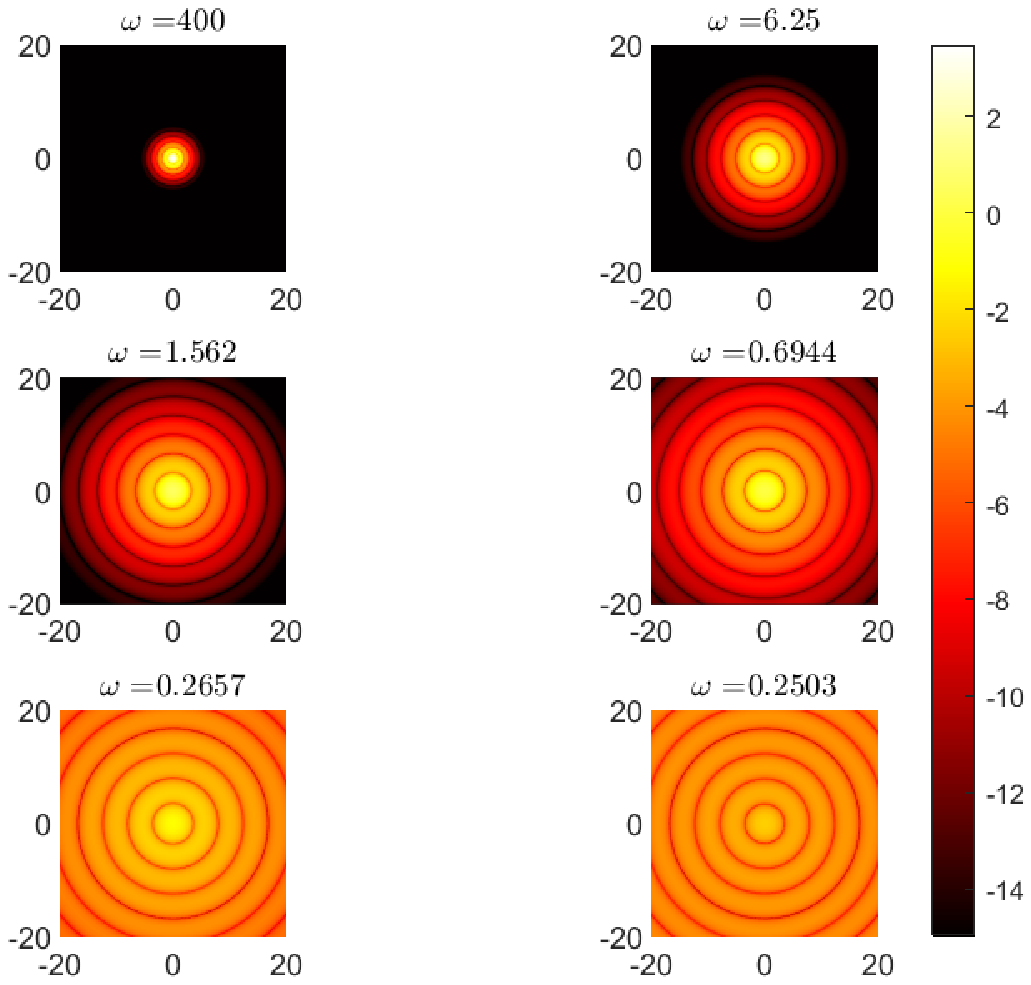}
    \caption{Several examples of the waveform of the solitary waves with $p=3$ in the isotropic case for different frequencies. We can observe how the solution profile changes from high $\omega$ to the linear limit of $\omega \rightarrow 0.25$. Notice the logarithmic scale of the colormap, and the (clear within that scale) zero-crossings of the solution. Figures for other values of $p$ are qualitatively similar.}
\label{newfig3}
\end{figure}

Figure \ref{newfig3} showcases the relevant isotropic (radially symmetric) waveforms and a variety of different frequencies, starting from the highly nonlinear limit of large $\omega$ (where the width of the solution shrinks,
while its amplitude grows), to progressively lower frequencies, eventually approaching the linear limit of small amplitude as $\omega \rightarrow 1/4$. It is important to note the logarithmic scale of the relevant colorbar, associated to continuously decreasing amplitudes as $\omega$ decreases. Noticeable also within this scale are the nodal lines of the solution, given the oscillatory nature of the linear tail as a result of the competition between the harmonic and biharmonic terms. Although this figure corresponds to $p=3$, it is qualitatively similar to the outcome for other values of $p$.

\begin{figure}[htb]
    \centering
    \includegraphics[width=80mm,clip = true]{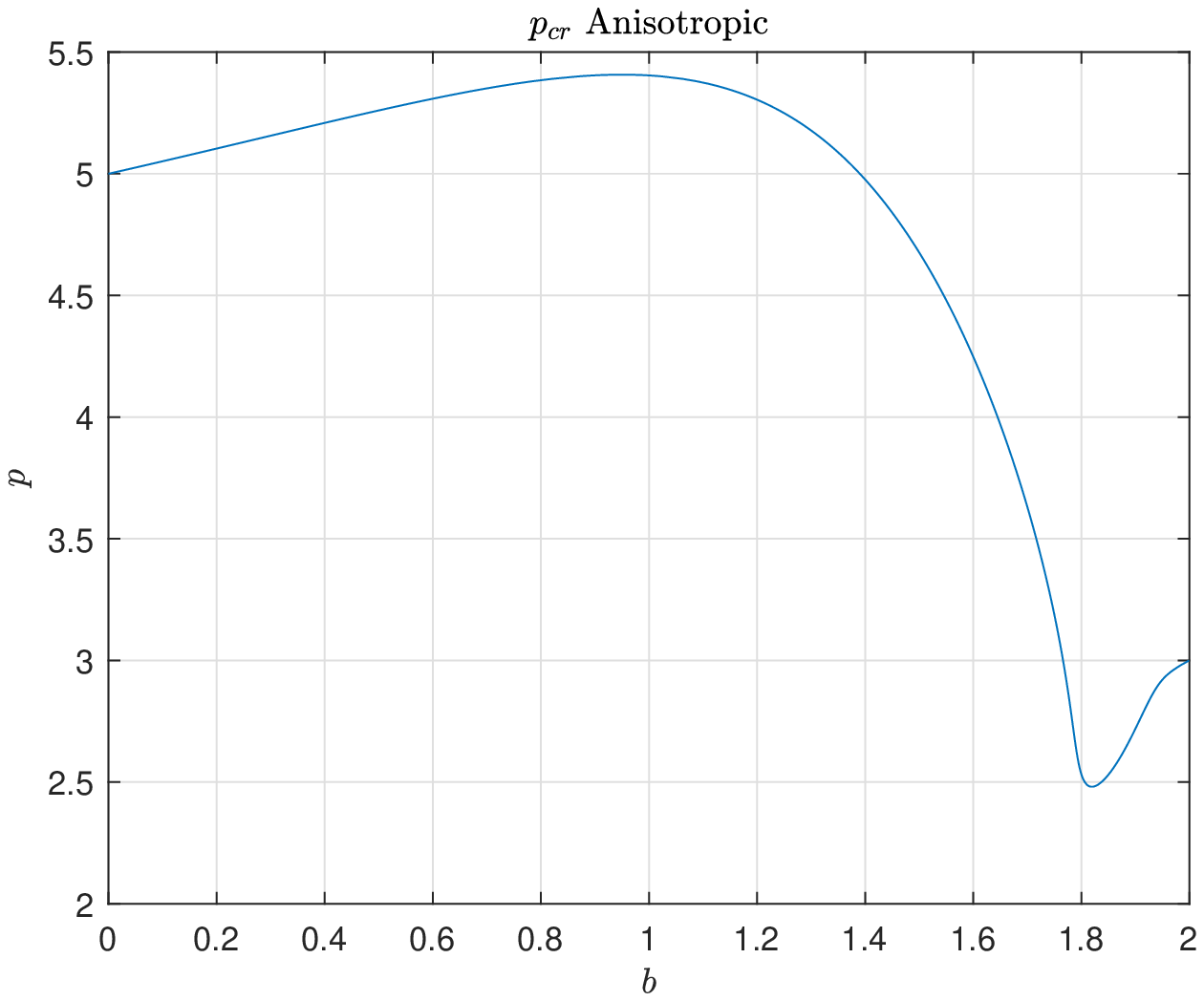}
    \caption{Same as Fig.~\ref{newfig1} but for the anisotropic case.}
    \label{newfig4}
\end{figure}

We have made a similar analysis for the anisotropic case. The two-parameter diagram of $p$ versus $b$ is displayed in Fig.~\ref{newfig4}. Indeed, the phenomenology is quite similar to the isotropic case, although with some notable differences that can be observed not only near the right edge
of the curve of Fig.~\ref{newfig4} but also
in the $P(\omega)$ plots for different values of $p$ in Fig.~\ref{newfig5}. For low enough $p$ (as, e.g., for $p=2$), the soliton is stable for every frequency,
and a solution exists for {\it all} values
of $P\equiv \lambda$, in line with Theorem 2.
However, contrary to the isotropic case, this
monotonic dependence of $P$ on $\omega$ does {\it not} persist up to $p=3$. Indeed, there exists an interval of $b$'s (or, equivalently,
of frequencies) for $p$ roughly between $2.481$ and $3$ in our two-dimensional setting, whereby $P(\omega)$ presents a maximum and a minimum and, as a consequence, the soliton becomes unstable in that interval (as shown, e.g., in the plot for $p=2.8$); this suggests that the linear limit is not approached in the same way as in the isotropic case near the critical point of $p=3$. Incidentally, it is especially relevant to note that the linear limit itself bears nontrivial differences as now the second partial derivative only occurs along $x$ direction. This leads, near the linear limit, to an oscillatory pattern {\it solely} along the $x$ direction, while the solution becomes separable in the form $X(x) \times Y(y)$. This can be clearly observed in the relevant solution panels in Fig.~\ref{newfig6}.

It is also relevant to note here that our
numerical results do not contradict
Theorem 2, although in the very vicinity
of the linear limit and for values of
$p$ between $2.5$ and $3$, we cannot
fully confirm the relevant theory.
In particular, a careful observation
of Fig.~\ref{newfig4}, e.g., for $p=2.8$
(top right panel) suggests a non-monotonic
dependence of $P$ on $\omega$ but as
the linear limit is approached, we are
unable to resolve the question of whether
{\it all} values of $P$ are accessible,
as one approaches closer and closer to
$\omega=1/4$, in line with the expectations
of Theorem 2. While the Theorem prompts
us to expect that to be the case (and
the numerics are also suggestive in this vein), the
highly computationally expensive,
anisotropic
2d computations needed have not allowed us
to fully confirm this limit, which remains an
interesting, open
computational question for future
studies.

\begin{figure}[htb]
    \centering
    \begin{tabular}{cc}
    \includegraphics[width=84mm,clip = true]{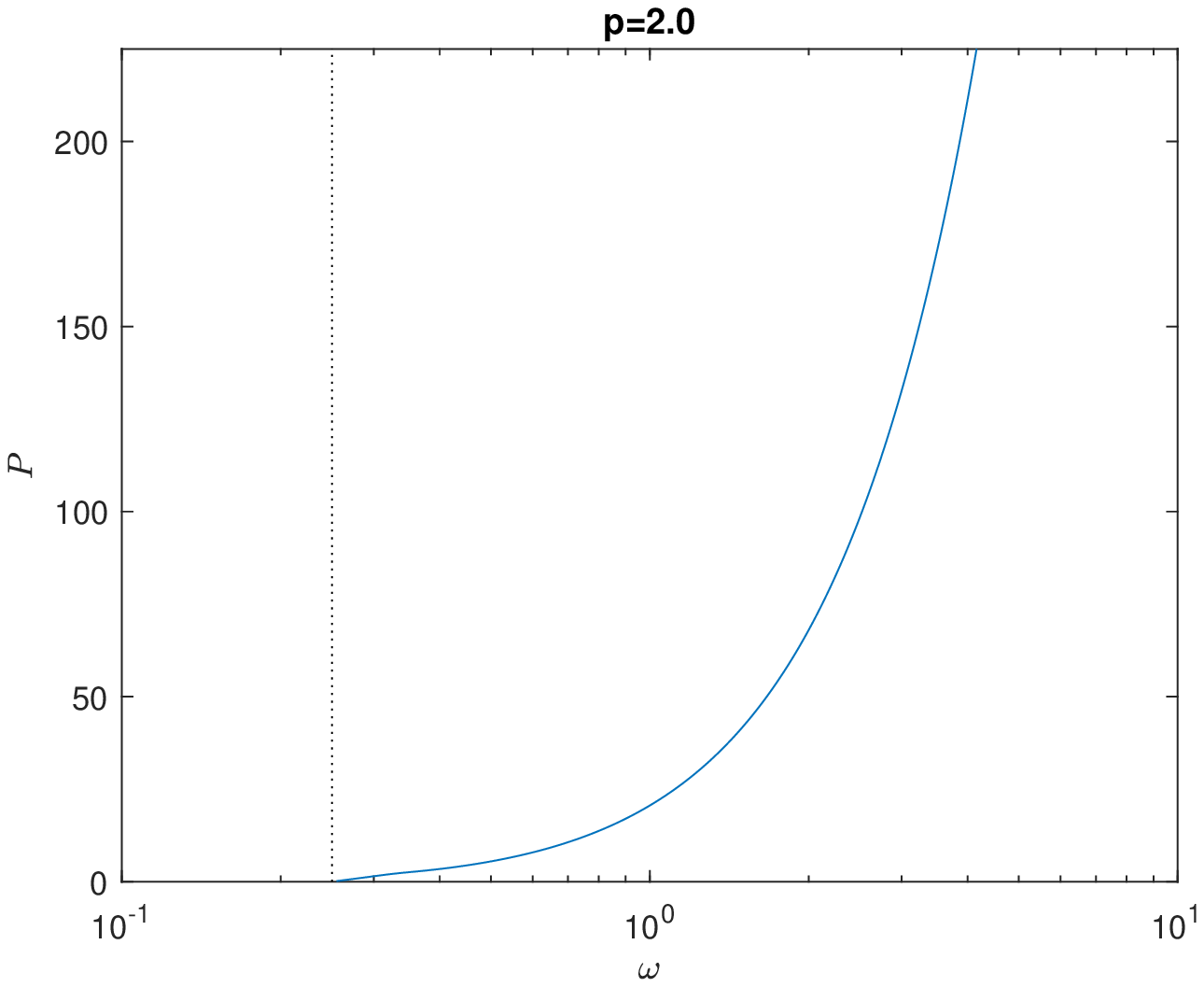} &
    \includegraphics[width=84mm,clip = true]{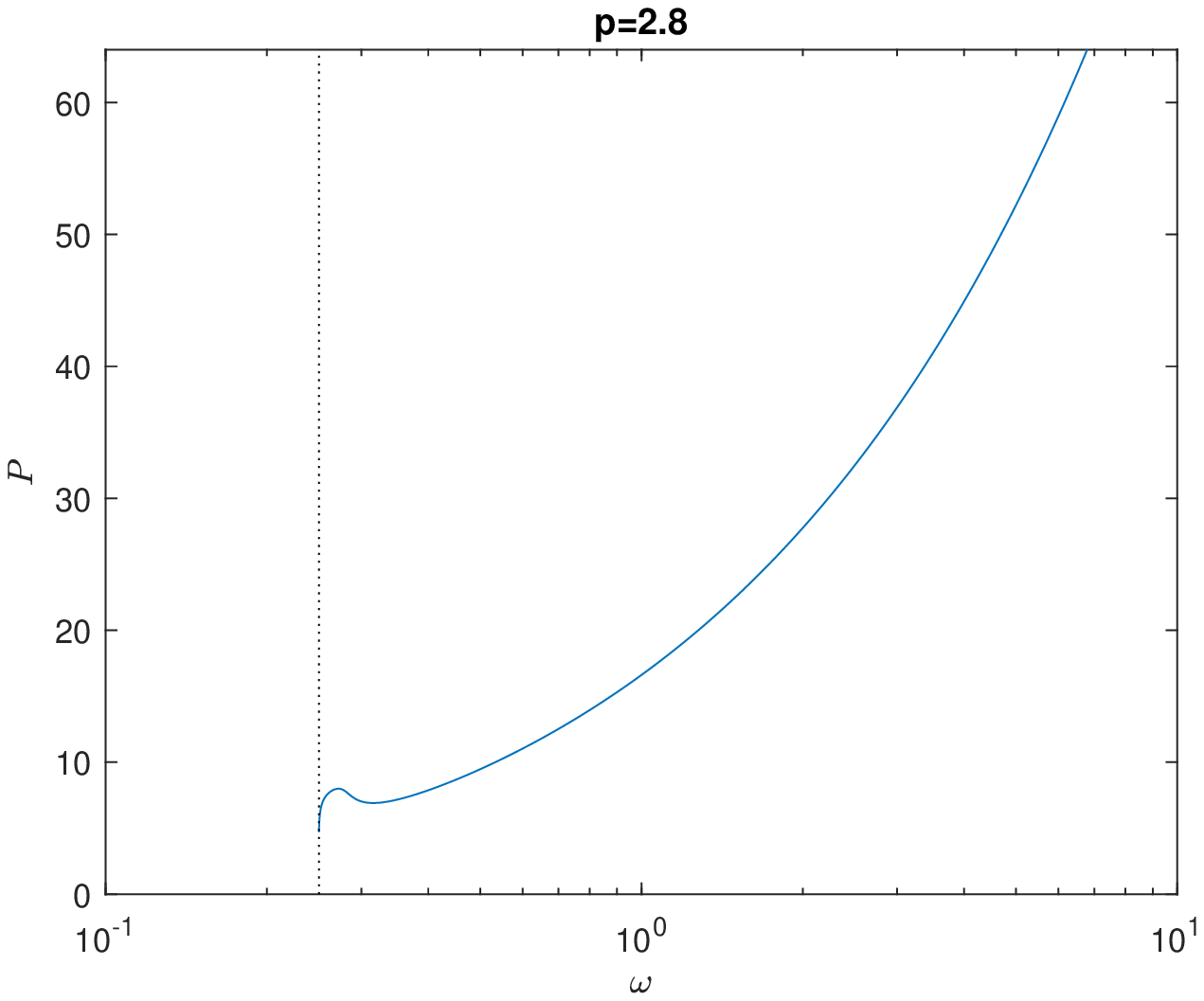} \\
    \includegraphics[width=84mm,clip = true]{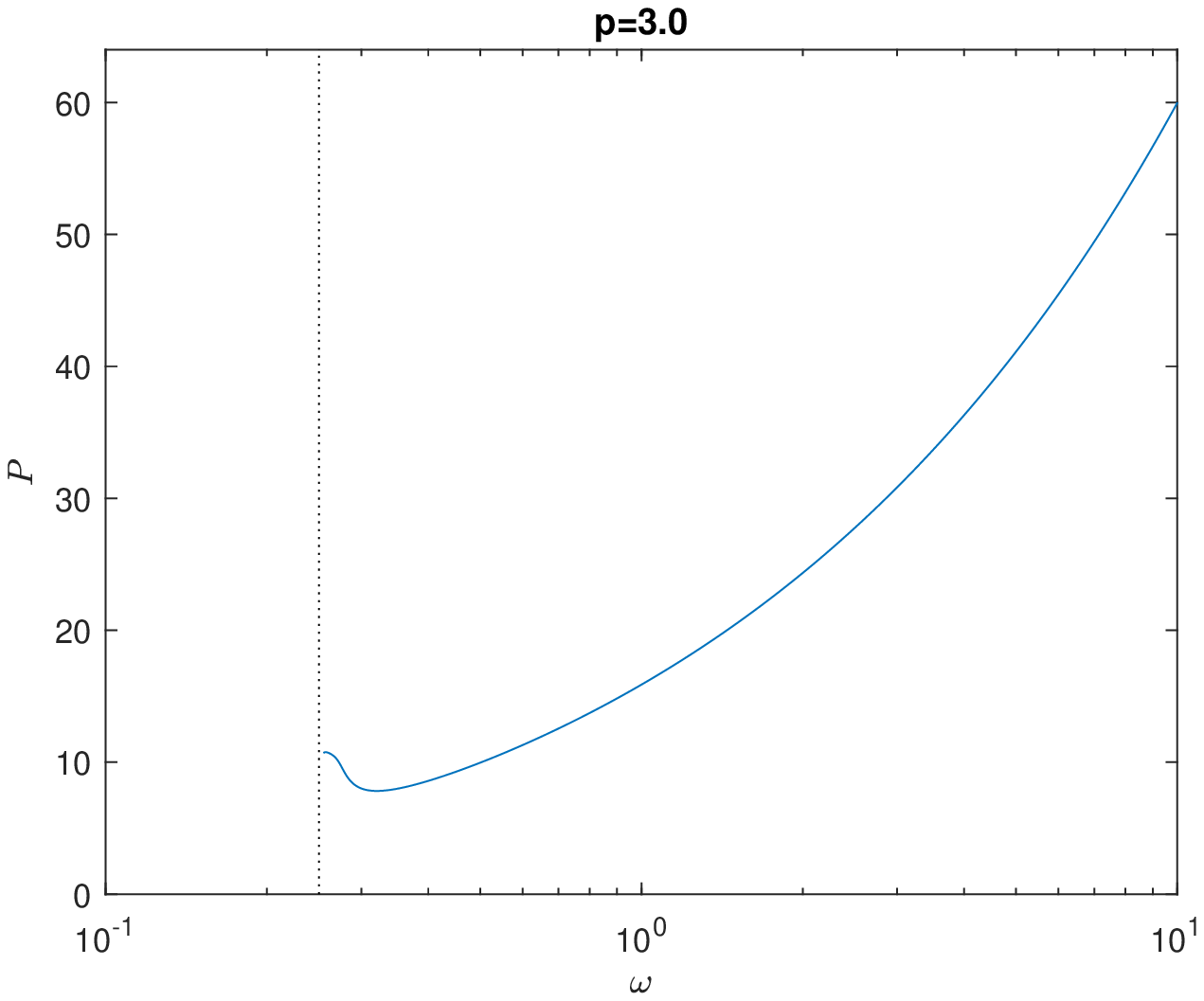} &
    \includegraphics[width=84mm,clip = true]{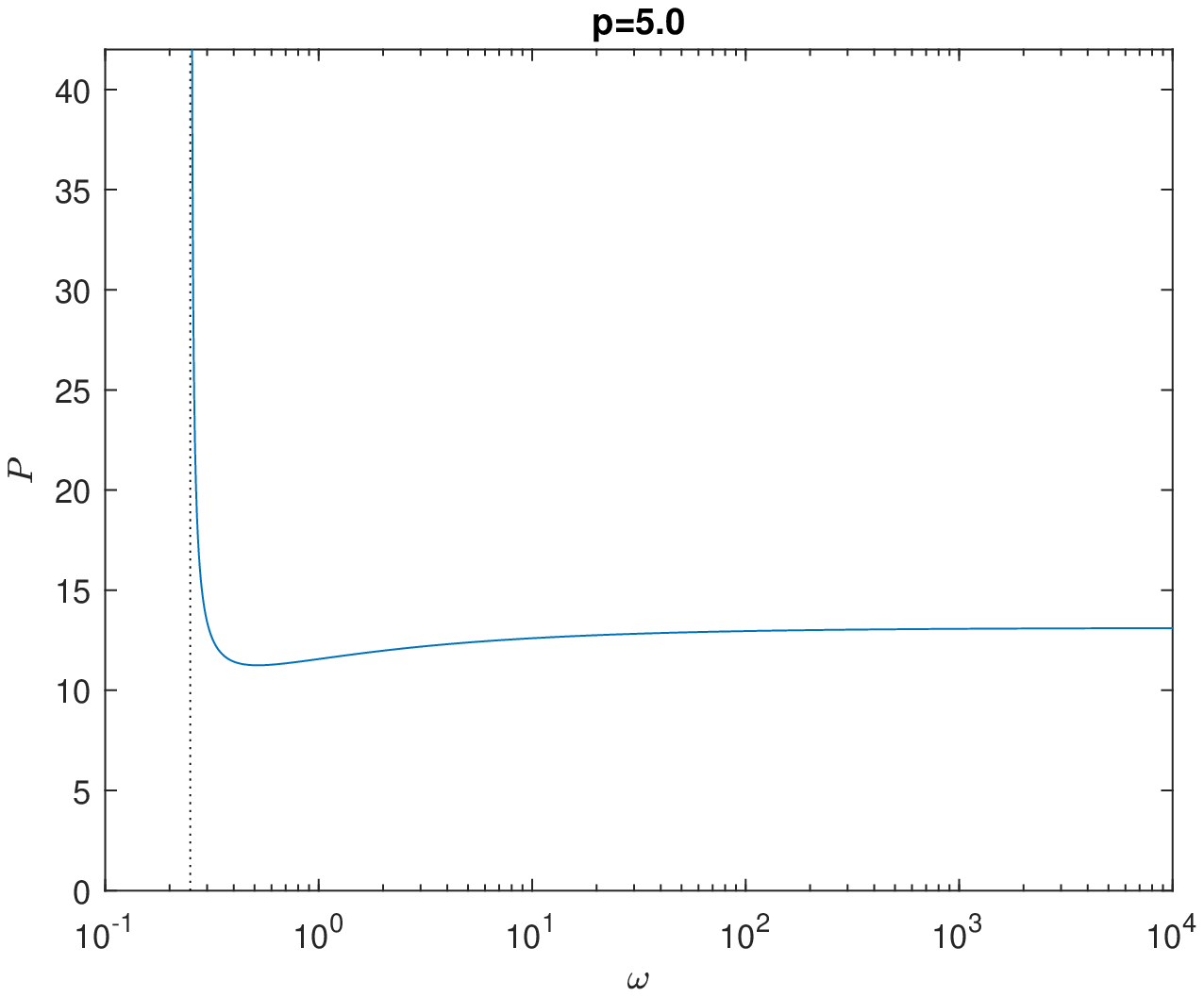} \\
    \includegraphics[width=84mm,clip = true]{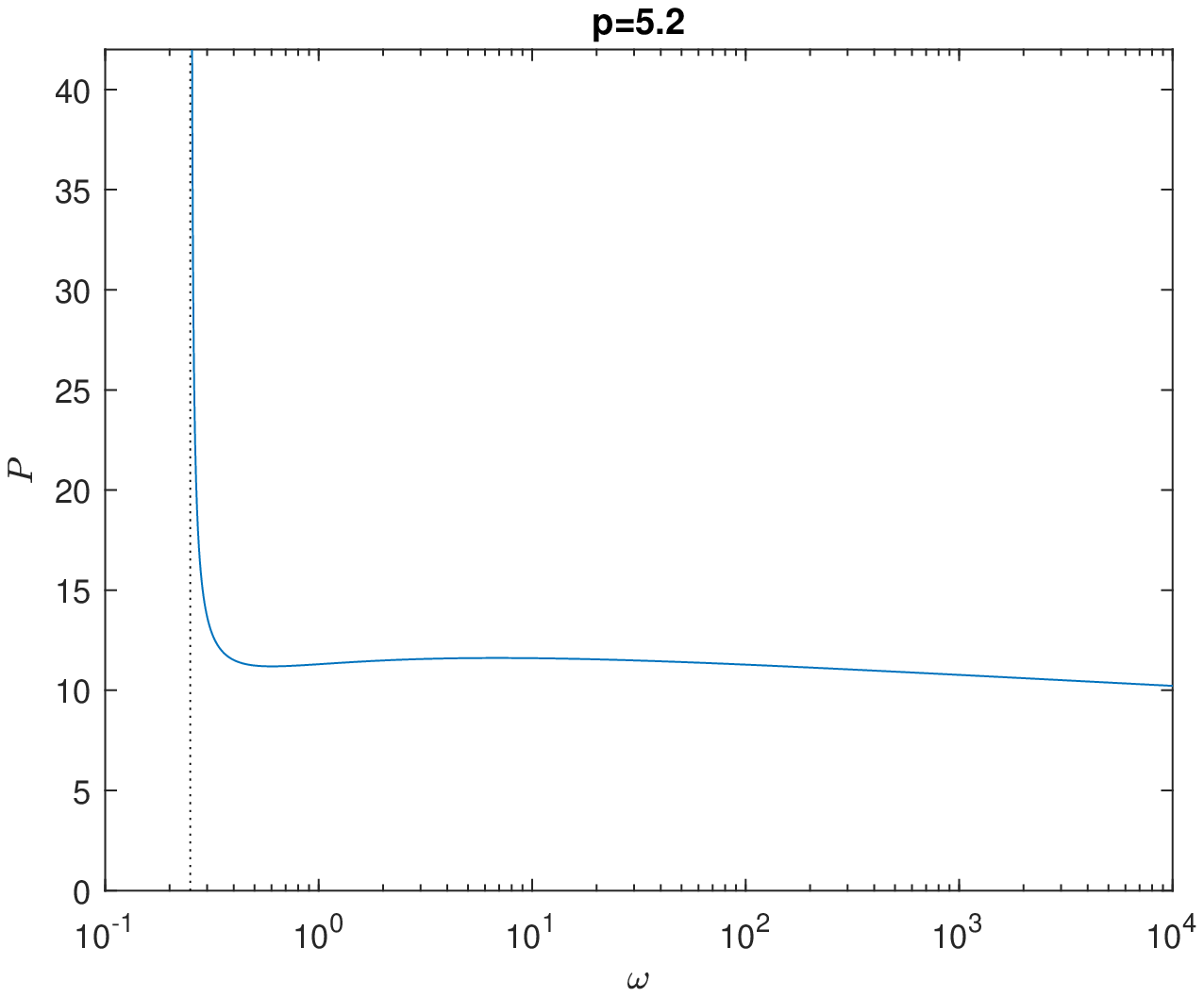} &
    \includegraphics[width=84mm,clip = true]{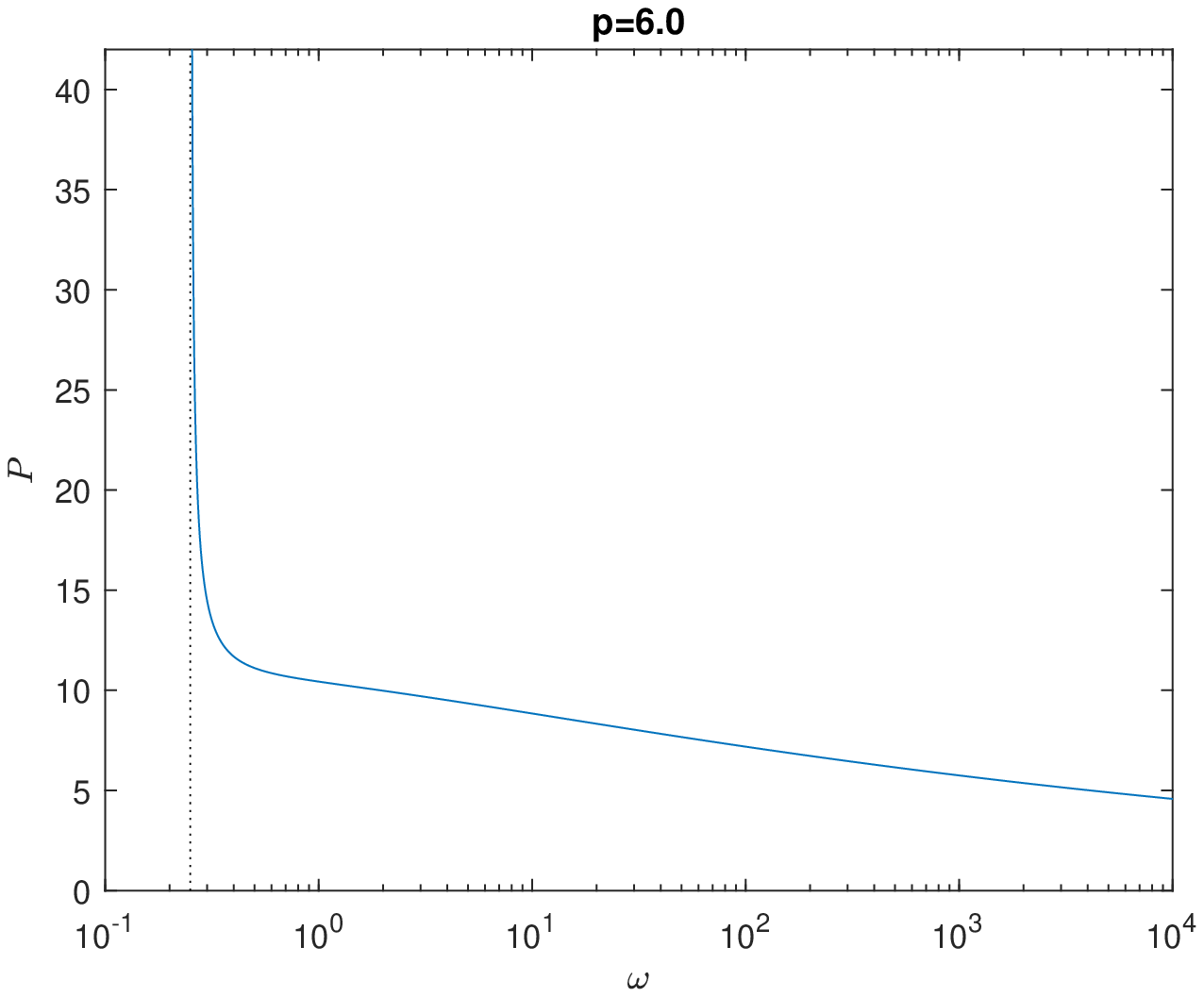}
    \end{tabular}
     \caption{Same as Fig.~\ref{newfig2} but for the anisotropic case and for
     different values of $p$. Again, a
     semi-logarithmic scale has been used for the frequencies.}
     \label{newfig5}
\end{figure}

When $p$ is increased from $p=3$, we observe a similar phenomenology as in the isotropic case, i.e. the curve $P(\omega)$ is monotonically decreasing near the linear limit and becomes monotonically increasing (spectrally stable) after a local minimum (see the plot for $p=5$). This phenomenology changes again (resembling the isotropic case) for $p>5$, as shown in the plot for $p=5.2$; here, an interval of stability for intermediate frequencies can be seen to arise. Finally, for sufficiently large values of $p$,
again similarly to the isotropic limit, the waves
become generically unstable for all frequencies, as illustrated by the monotonically decreasing dependence of $P(\omega)$ in the plot for $p=6$. It is interesting to point out, however, that the relevant threshold is considerably lower in the anisotropic case where it is around $p=5.407$ for $d=2$, while in the isotropic one the threshold is around $p=6.565$ for $d=2$.

\begin{figure}[htb]
    \centering
    \includegraphics[width=150mm,clip = true]{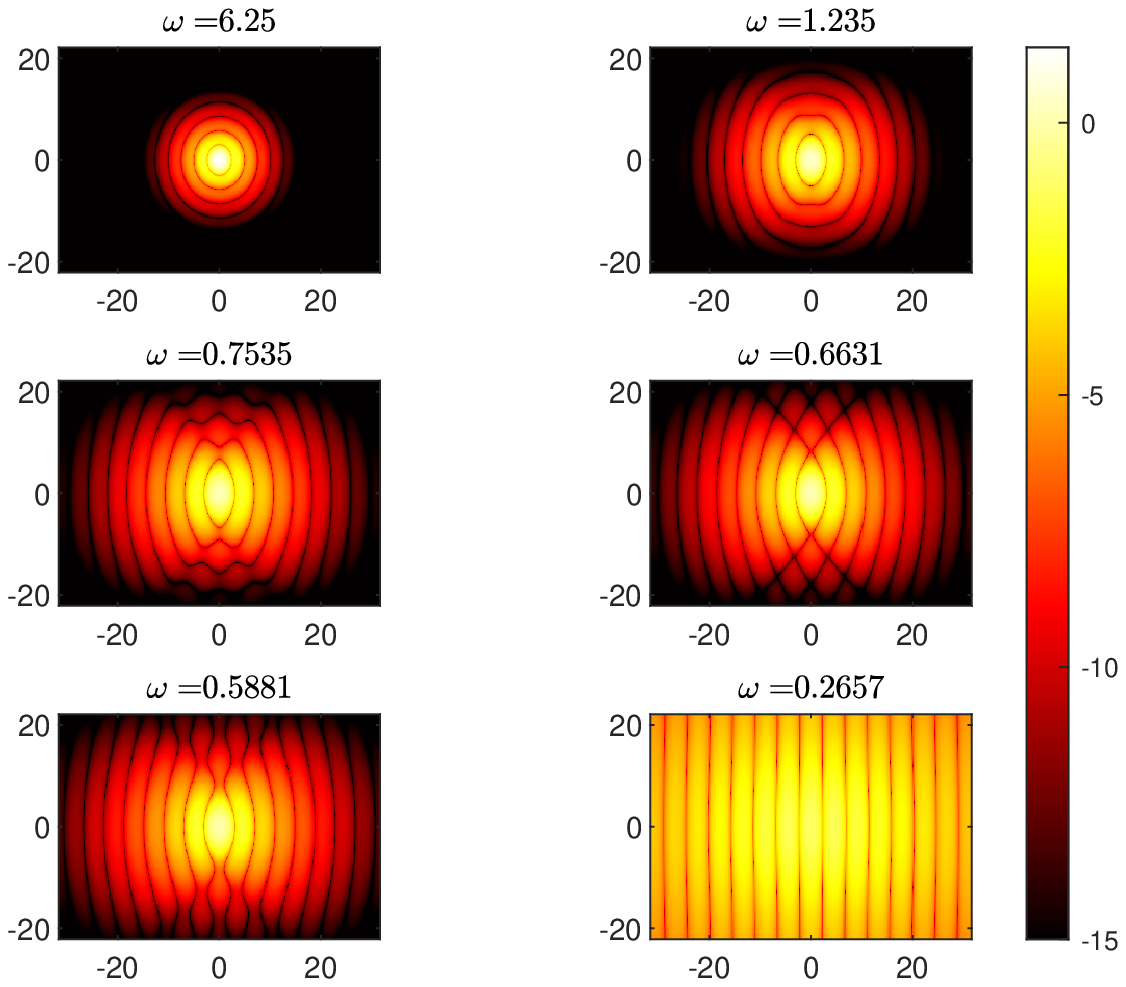}
    \caption{Same as Fig.~\ref{newfig3} but for the anisotropic case with $b=1$. Contrary to the isotropic case, the anisotropy reflects in the solution as it acquires, when approaching the linear limit $\omega\rightarrow0.25$ a separable form in the $x$ and $y$ dependence with the nodal lines being uniform along direction $y$.}
\label{newfig6}
\end{figure}

\section{Conclusions and Future Challenges}

In the present work, we have examined
in a systematic way the properties of
higher-dimensional NLS models
with mixed dispersion with a numerical emphasis
on the more computationally tractable case of $d=2$. In
particular, we have considered a setting
in which there is
a competition between a focusing quartic
and a defocusing quadratic dispersion term.
Our Theorems 1 and 2 have offered a rigorous
perspective on the relevant phenomenology,
providing bounds on the nonlinearity exponent
(as a function of dimension) for which
minimizers of the (squared) $L^2$ norm
exist for all values of that quantity,
as well as ones for which such minimizers
do not exist for all powers. This has been
done both for the isotropic case involving
radial solutions, as well as for the
anisotropic one where the second derivative
term was only active along a particular
direction.
We have complemented these findings with
detailed numerical results and corresponding
multi-parameter diagrams detailing the stability
of the single-humped states of the system.
In both isotropic and anisotropic cases,
we found that the exponent $p$ of the nonlinear term
is sufficiently small, the dependence of
the power on the frequency is monotonic,
while above a certain threshold more
complex non-monotonic dependencies arise.
Our numerical results for the case of $d=2$ go beyond the
accessible limits to our Theorems,
identifying possible stable solutions
even beyond the exponent bound of $p=1+8/d$
(where $d$ is the dimension), as well
as identifying exponents beyond which no
spectrally stable solutions arise.

While we believe that these results
offer numerous insights into higher-dimensional
systems with competing quadratic and quartic
dispersions, there are also numerous open
questions to consider. As concerns the
problem examined herein, an important one
concerns the close proximity of the linear
limit for the anisotropic case when $p$ is
in the vicinity of $p=3 \equiv 1+4/d$
(for our case of $d=2$). More generally,
for the case considered herein, it would
be interesting to also examine if higher-excited
states, including multi-soliton ones, as
well as vortical ones are feasible and
potentially also spectrally stable
(and under what conditions).
Additionally, numerical studies of the more computationally
demanding case of $d=3$ would also be worthwhile
in connection to the Theorems presented herein.
Last but not least, exploring similar questions
with the recently accessible experimentally,
even orders of dispersion~\cite{RungePRR2021}
would also be of particular interest.
Such studies are presently under
consideration and will be reported in
future publications.

\section*{Acknowledgment}
A.S. acknowledges partial support from NSF-DMS \# 1908626 and \# 2204788. J.C.-M. acknowledges support from EU (FEDER program 2014-2020) through both Consejería de Economía, Conocimiento, Empresas y Universidad de la Junta de Andalucía (under the projects P18-RT-3480 and US-1380977), and MICINN and AEI (under the projects PID2019-110430GB-C21 and PID2020-112620GB-I00.

\end{document}